\begin{document}

\title{Distributed Model Predictive Control for Asynchronous Multi-agent Systems with Self-Triggered Coordinator}

\author[1,2]{Qianqian Chen}

\author[1,2]{Shaoyuan Li}

\authormark{CHEN \textsc{et al}}

\address[1]{\orgdiv{Department of Automation}, \orgname{Shanghai Jiao Tong University}, \orgaddress{\state{Shanghai}, \country{China}}}

\address[2]{\orgdiv{Key Laboratory of System Control and Information processing}, \orgname{Shanghai Jiao Tong University}, \orgaddress{\state{Shanghai}, \country{China}}}

\corres{Shaoyuan Li, Department of Automation, Shanghai Jiao Tong University, Shanghai 200240, China.
\\ \email{syli@sjtu.edu.cn}}

\abstract[Abstract]{
This paper investigates the distributed model predictive control for an asynchronous nonlinear multi-agent system with external interference via a self-triggered generator and a prediction horizon regulator.
First, a shrinking constraint related to the error between the actual state and the predicted state is introduced into the optimal control problem to enable the robustness of the system.
Then, the trigger interval and the corresponding prediction horizon are determined by altering the expression of the Lyapunov function, thus achieving a trade-off between control performance and energy loss.
By implementing the proposed algorithm, the coordination objective of the multi-agent system is achieved under asynchronous communication.
Finally, the recursive feasibility and stability are proven successively.
An illustrative example is conducted to demonstrate the merits of the presented approach.
}

\keywords{
nonlinear predictive control,
asynchronous multi-agent system,	
distributed model predictive control,
self-triggered control
}


\maketitle

\section{Introduction}
Following the expansion of industrial systems, distributed model predictive control (DMPC) \cite{SVRWP-2010} emerges as a powerful framework for large-scale multi-agent systems (MASs) \cite{W-2009, ZLJL-2022}.
The essence of DMPC is allocating computing tasks to multiple agents for parallel processing.
Each agent adopts model predictive control technology and interacts with the neighbor agents, thus achieving the predetermined distributed targets.
DMPC possesses the superiority of improved scalability, favorable reliability, and high resource utilization \cite{GLZ-2020}. Therefore, this research field arouses widespread concern among scholars and produces many achievements for some practical systems.
For instance, the authors in \cite{ZGLXZL-2020} propose the optimal economic dispatch technique for islands' multi-microgrids based on dynamic non-cooperative game theory.
The minimum performance bound is devised in \cite{HL-2021} by DMPC algorithm for power flow systems. The fault-tolerant issue as a consequence of DMPC methodology is addressed in \cite{ZSLD-2022} for multi-unmanned aerial vehicle systems.
Additionally, existing theoretical research recognizes the critical role played by DMPC algorithm.
For example, the authors in \cite{DZQSX-2022} and \cite{DQSZX-2022} take the economic performance of the optimized system into account, respectively under the Lyapunov-based approach and terminal penalty approach.
The authors in \cite{CMFMC-2021} incorporate the idea of coalition, and achieve the time-varying degree of system collaboration by dynamically forming different clusters.
A novel robust adaptive terminal set is proposed in \cite{WM-2022} to expand the attraction region. The authors in \cite{HL-2023} utilize the topological information to construct an available architecture, which is conducive to the consequential DMPC coordination.
The authors in \cite{DGPW-2016} apply DMPC mechanism to achieve the tracking and formation objective under the time-varying topology.
Nevertheless, the above control strategies are \textit{time-based}, i.e., the optimal control problem (OCP) is recalculated at each instant.
The repetitive access leads to the waste of computing costs and communication resources.

To this end, \textit{event-based} control \cite{T-2007} technique is presented to achieve resource conservation. Under the framework of the event-based DMPC mechanism \cite{DFJ-2011}, OCP for each agent is solved merely when the conditions about control performance are violated, thus avoiding unnecessary calculation. Event-based DMPC can be classified into event-triggered DMPC \cite{ZSLNL-2019, ZLXG-2022} and self-triggered DMPC. In contrast to event-triggered DMPC, self-triggered DMPC avoids the consistent monitoring of the system and directly calculates the next instant to solve the OCP on account of the information known at the current instant. Extensive research shows that this control strategy can not only effectively reduce the computational burden, but also ensure feasibility and stability.
{\textcolor{blue}{To mention a few, the authors in \cite{ZJWL-2018} suppose that all agents can reach dynamic consensus under the proposed self-triggered DMPC algorithm. The authors in \cite{MZLK-2019} focus on the trade-off between the communication cost and the control performance by explicitly including both quantities in the cost function. However, the controlled plant and the constraints considered in these results are too ideal that robustness discussion is not established. The influence of external disturbance should be taken into account since uncertainty always exists in practice. As a result, several attempts are made. For instance, in \cite{WZS-2021}, a min-max OCP is established under the framework of DMPC via the self-triggered generator to counteract the external interference. }}Nevertheless, the complexity of the constructed OCP conduces to a high degree of computing cost and even difficulty to solve. In general, the self-triggered DMPC algorithm should be provided with desirable robustness and low complexity, which motivates us to present this work.

Some challenges still remain for the self-triggered DMPC algorithm. For nonlinear MASs subject to external perturbation, the access of the self-triggered scheme makes the error between the actual state and the predicted state more difficult to estimate. It is not easy to ensure the robustness and coordination of the MAS under the self-triggered scheme. Additionally, an appropriate self-triggered generator should be searched to maintain superiority while reducing the deviation redundancy of the criteria. In particular, it is tough to guarantee the recursive feasibility of the algorithm and the stability of the system under the effect of the trigger mechanism. Hence, the presented work attempts to confront the issues and enrich the available research on this topic.

Inspired by the above discussion, this paper provides new insights into reducing the energy loss of the DMPC algorithm via the self-triggered generator and the prediction horizon strategy for the perturbed discrete-time MAS. The OCP for each agent is solved asynchronously with the respective prediction horizon. The main contributions are three aspects:
\begin{enumerate}
	\item[i)]
	A self-triggered DMPC mechanism is proposed for nonlinear MASs subject to bounded external perturbation.
	The cooperative control goal is achieved by balancing the control performance and energy loss.
	
	\item[ii)]
	By altering the form of the candidate Lyapunov function, the trigger prerequisite is derived in accordance with individual dynamics regardless of the neighbor information, leading to less reliance on communication.

	\item[iii)]
	The sufficient criteria are proposed, which is used to prove the recursive feasibility of proposed algorithm and the stability of closed-loop system.
\end{enumerate}

\emph{Notations:}
$\mathbb{R}$ and $\mathbb{R}^n$ denote the set of real integers and n-dimensional Euclidean space.
$\mathbb{N}$ denotes the set of nonnegative integers. Let $[a,b] = \{a,a+1,\dots,b\}$, $a,b \in \mathbb{N}$.
For a given vector $x$, $\| x \| = \sqrt{x^Tx}$ denotes the Euclidean norm, $\| x \| _{P}= \sqrt{x^TPx}$ denotes the $P$-weighted norm.
For a given matrix $P$, $P>0$ means $P$ is positive definite and $P\geq0$ means $P$ is positive semi-definite. $\overline{\lambda}(P)$ and $\underline{\lambda}(P)$ denote the maximum and minimum eigenvalues, respectively.
The Minkowski sum and Minkowski difference of two convex sets are described as $\mathcal{A} \oplus \mathcal{B} \triangleq \{ a+b \;|\; a \in \mathcal{A}, b \in \mathcal{B} \}$, $\mathcal{A} \ominus \mathcal{B} \triangleq \{ a \;|\; \forall b \in \mathcal{B},  a+b \in \mathcal{A} \}$.
$\mathcal{B}_{P}\left(r\right) = \{x \in \mathbb{R}^n: \left\| x \right\|^2_{P} \leq r^2 \}$ is an ellipsoid with weighted matrix $P$.

\begin{figure*}
	\centering
	\includegraphics[width=15cm]{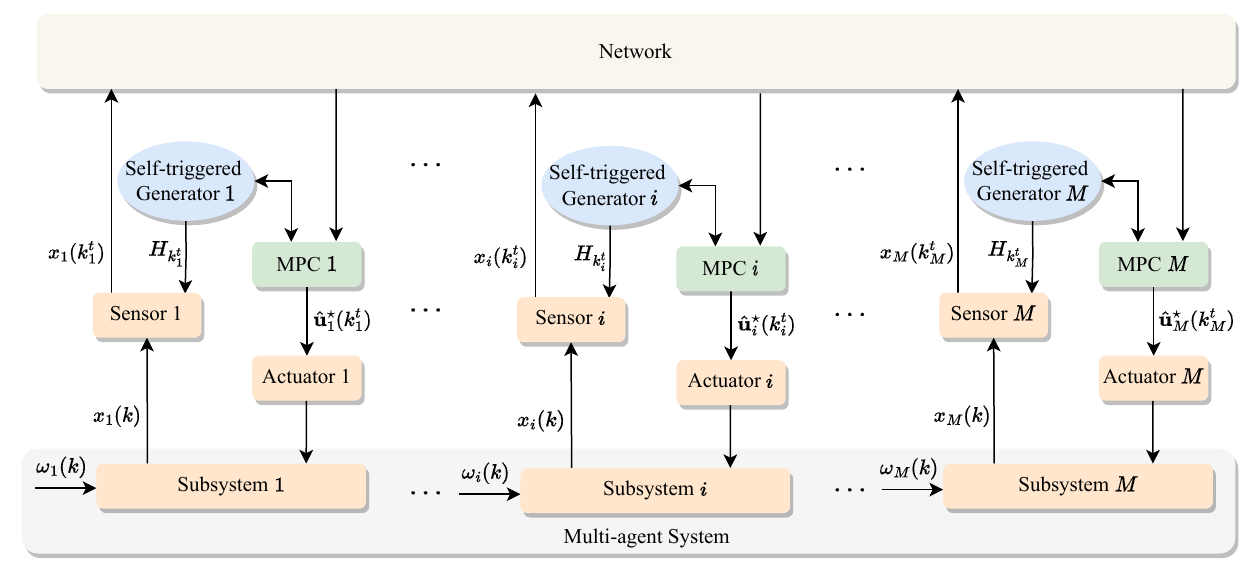}
	\caption{The diagram of DMPC algorithm with self-triggered coordinator.}
	\label{Fig: diagram}
\end{figure*}

\section{Preliminaries and Problem Statements}
\subsection{System Definition}
We consider a group of discrete-time nonlinear agents. {\textcolor{blue}{The communication topology of the  MAS is described by a directed graph $\mathcal{G} = (\mathcal{M}, \mathcal{E})$, where $\mathcal{M} = \{1,...,M\}$ is the node set and  $\mathcal{E} \in \mathcal{M} \times \mathcal{M}$ is the edge set. The existence of the edge $(i, j) \in \mathcal{E}$ indicates that agent $j$ can obtain the information from $i$th agent. Agent $j$ is called a neighbor of $i$ if $(i, j) \in \mathcal{E}$. And the set of neighbors of agent $i$ is denoted by $\mathcal{N}_i = \{ j\in\mathcal{M}:(j,i) \in \mathcal{E} \}$. In this paper, without losing generality, we suppose the graph $\mathcal{G}$ is strongly connected such that for each pair of agent $(i, j)$, there exists at least one path from $i$ to $j$ \cite{BJ-2014}.}} Agent $i$,  $i \in \mathcal{M}$, accords with the following description:
\begin{eqnarray}\label{subsystem}
  x_i (k+1) = f_i (x_i(k), u_i(k) ) + \omega_i(k),
\end{eqnarray}
where $x_i(k) \in \mathbb{R}^{n_i}$ is the real state of agent $i$ at time step $k$; $u_i(k) \in \mathbb{R}^{m_i}$ is the control input; $\omega_i(k) \in \mathbb{R}^{n_i}$ is the external disturbance. Suppose the subsystem $i$ is subject to hard constraints:
\begin{eqnarray*}
  x_i(k) \in \mathcal{X}_i, \quad u_i(k) \in \mathcal{U}_i, \quad \omega_i \in \mathcal{W}_i,
\end{eqnarray*}
where $\mathcal{X}_i \subset \mathbb{R}^{n_i}$ is closed, $\mathcal{U}_i \subset \mathbb{R}^{m_i}$ is compact, and $\mathcal{W}_i = \{ w_i(k) \in \mathbb{R}^{n_i} : \left\| \omega_i(k) \right\| \leq \eta_i \}$.

\begin{assumption}\label{Assumption: Lipschitz}
	\textcolor{blue}{The function $f_i: \mathbb{R}^{n_i} \times \mathbb{R}^{m_i} \mapsto \mathbb{R}^{n_i}$, with $f(\mathbf{0}, \mathbf{0}) = \mathbf{0}$ is twice differentiable and locally Lipschitz continuous in the first argument $x_i$. Let $g_i(x_i(k), u_i(k)) = f_i(x_i(k), u_i(k)) - x_i(k)$. There exists positive Lipschitz constant $L_i$ such that
		$||g_i(x_1, u) - g_i(x_2, u)|| \leq L_i ||x_1 - x_2||$ for any $x_1,x_2 \in \mathcal{X}_i$, $u_i \in \mathcal{U}_i$. If $u_i(k) = K_i x_i(k)$, the relevant Lipschitz constant is $L_{ri}$.}
\end{assumption}

In the subsequent design procedure, a local auxiliary controller is applied to restrain the system state. To this end, the linearized model of the above nominal subsystem at the origin is constructed based on the Jacobin linearization technique, which is described as follows:
\begin{eqnarray}\label{linearized subsystem}
  x_i(k+1) = A_i x_i\left(k\right) + B_i u_i(k) + \omega_i(k),
\end{eqnarray}
with $A_i = \frac{\partial f_i}{\partial x_i}|_{(0,0)}$, and $B_i = \frac{\partial f_i}{\partial u_i}|_{(0,0)}$.

\begin{assumption}\label{Assumption: Schur}
  Suppose the matrix pair $(A_i, B_i)$ is stabilizable for each linearized subsystem (\ref{linearized subsystem}), i.e., there exists a state feedback matrix $K_i$ such that $A_i + B_i K_i$ is Schur.
\end{assumption}

\subsection{Problem Formulation}
The primary control objective is to achieve the consensus of each subsystem and satisfy the constraints. Above all, we define the nominal system for subsystem $i$:
\begin{eqnarray}\label{nominal subsystem}
    \hat{x}_i(k+1) = f_i(\hat{x}_i(k), \hat{u}_i(k)),
\end{eqnarray}
where $\hat{x}_i(k) \in \mathbb{R}^{n_i}$ and $\hat{u}_i(k) \in \mathbb{R}^{m_i}$ are the nominal state and input, respectively.

As described, for each subsystem, the concrete dual-mode DMPC strategy can be presented as follows:
\begin{itemize}
  \item[1)]
  Before the state is steered into the terminal set $\mathcal{X}^r_i$, the optimization problem is implemented at each trigger instant. Let $k^t_i \in \mathbb{N}$ denotes the $t$-th trigger instant of agent $i$, $N_{k^t_i}$ denotes the prediction horizon of agent $i$ at $k^t_i$ calculated in advance. The initial prescribed prediction horizon is $N_0$. Suppose the  optimal control sequence and the corresponding state sequence for agent $i$ at $k^t_i$ are $\hat{\mathbf{u}}^{\star}_i(k^t_i) = \{ \hat{u}^{\star}_i (k^t_i | k^t_i), \hat{u}^{\star}_i (k^t_i +1 | k^t_i), ..., \hat{u}^{\star}_i (k^t_i + N_{k^t_i} -1 | k^t_i ) \}$ and $\hat{\mathbf{x}}^{\star}_i(k^t_i) = \{  \hat{x}^{\star}_i (k^t_i +1 | k^t_i ), ..., \hat{x}^{\star}_i (k^t_i + N_{k^t_i}| k^t_i) \}$, then the actual control input is
  \begin{eqnarray}\label{open phase control}
    u_i^{\text{mpc}}(k) = \hat{u}^{\star}_i(k^t_i +l | k^t_i), \quad l \in [0, \,\, H_{k^t_i}-1],
  \end{eqnarray}
  for $i \in \mathcal{M}$. $H(k^t_i) : \mathbb{N} \mapsto \mathbb{N}$ is the inter-execution time determined at $k^t_i$, which is specified in Section \ref{Subsection: Generator}. We note the next trigger instant is determined as
  \begin{eqnarray}\label{next triggering time}
    k^{t+1}_i = k^{t}_i + H_{k^t_i}.
  \end{eqnarray}
  When the system state approaches the terminal region gradually, it is rational to shrink the prediction domain. The prediction horizon of the next OCP is adjusted as
  \begin{eqnarray}\label{next prediction horizon}
    N_{k^{t+1}_i} = N_{k^t_i} - \bar{N}(H_{k^t_i}),
  \end{eqnarray}
  where $\bar{N}(H_{k^t_i}) : \mathbb{N} \mapsto \mathbb{N}$, will be specified in Section \ref{Subsection: Generator}, is the horizon shrinkage extent between the subsequent OCP.
  \item[2)]
  As the system enters the terminal set $\mathcal{X}^r_i$, the local controller switches to
  \begin{eqnarray}\label{terminal control}
    u_i^{\text{local}}(k) = K_i x_i(k),
  \end{eqnarray}
\end{itemize}

\section{Self-triggered DMPC Algorithm}
In this section, the concrete DMPC optimization problem solved at each trigger instant is demonstrated. Then the asynchronous information aggregation process is introduced. The distributed self-triggered scheduling mechanism and the adaptive prediction horizon updating strategy are elaborated. The control architecture of the overall system is depicted in Fig. \ref{Fig: diagram}. Moreover, the pseudocode of the proposed algorithm is induced to clarify the whole process.

\subsection{DMPC Optimization Problem}\label{Subsection: DMPC optimization problem}
At $k^t_i$, each agent $i\in\mathcal{M}$ constructs an estimated state sequence $\hat{\mathbf{x}}^{ref}_{\mathcal{N}_i}(k^t_i)$ based on the most recently transmitted information from neighbors, aiming to achieve consensus among all agents. The local cost function of the optimization problem is:
{\textcolor{blue}{
\begin{eqnarray*}
  J_i(k^t_i) &\triangleq& J_i (x_i(k^t_i), \hat{\mathbf{u}}_i(k^t_i), \hat{\mathbf{x}}^{ref}_{\mathcal{N}_i}(k^t_i), N_{k^t_i} ) \\
  &=& J^s_i (x_i(k^t_i), \hat{\mathbf{u}}_i(k^t_i), N_{k^t_i} )  + J^c_i (x_i(k^t_i), \hat{\mathbf{x}}^{ref}_{\mathcal{N}_i}(k^t_i), N_{k^t_i} ), \\
  &\triangleq& J^s_i(k^t_i) + J^c_i(k^t_i),
\end{eqnarray*}
where $J^s_i(k^t_i)$ is the egoistic cost which is contributory to itself, and $J^c_i(k^t_i)$ is the altruistic cost which is added to achieve consensus. Specifically,
\begin{eqnarray*}
	J^s_i (k^t_i)
	&=& {\textstyle \sum_{l=0}^{N_{k^t_i} - 1}} L_i(\hat{x}_i(k^t_i + l | k^t_i), \hat{x}_i(k^t_i + l | k^t_i)) \\
	&& +  V^f_i (\hat{x}_i(k^t_i + N_{k^t_i} | k^t_i),
\end{eqnarray*}
 where $V^f_i(x) = \| x \|^2_{P_i}$, $L_i(x,u)=\| x \|^2_{Q_i} + \| u \|^2_{R_i}$. The weighted matrices satisfy $Q_i\geq0$, $R_i>0$, $P_i\geq0$. And
\begin{eqnarray*}
	J^c_i (k^t_i) = {\textstyle \sum_{l=0}^{N_{k^t_i} - 1} \sum_{j \in \mathcal{N}_{i}} } \| \hat{x}_i(k^t_i + l | k^t_i) - \hat{x}^{ref}_j(k^t_i + l | k^t_i) \|^2_{Q_{ij}},
\end{eqnarray*}
where $Q_{ij}\geq0$. }} Since dual-mode DMPC scheme is adopted to explicitly guarantee the stability of the subsystem, the terminal cost and the terminal region are required to meet the following assumptions.
\begin{assumption}\label{Assumption: terminal}
	With respect to the nominal subsystem (\ref{nominal subsystem}), there exists a scalar $r_i > 0$, and a quadratic sublevel set $\mathcal{X}^r_i = \{ \hat{x}_i(k): V^f_i \left(\hat{x}_i(k)\right) \leq r^2_i \} \subseteq \mathcal{X}_i$, which is positively invariant, such that the following descriptions hold for $\forall \hat{x}_i(k) \in \mathcal{X}^r_i$: 
	
	1) $K_i\hat{x}_i(k) = \mathcal{U}_i$; 
	
	2) $\hat{x}_i(k+1) = f_i (\hat{x}_i(k), K_i \hat{x}_i(k) ) \in \mathcal{X}^r_i$; 
	
	3) $ V^f_i (\hat{x}_i(k+1)) - V^f_i(\hat{x}_i(k)) \leq - \| \hat{x}_i(k) \|^2_{\overline{Q}_i}$, where $\overline{Q}_i = Q_i + K^T_i R_i K_i$.
\end{assumption}

With respect to the original system dynamic (\ref{subsystem}) and nominal system dynamic (\ref{nominal subsystem}), the following lemma can be utilized to guarantee the upper bound to the state error:
\begin{lemma}[\cite{XDLX-2021}]\label{Lemma: Gronwall}
	Consider the subsystem (\ref{subsystem}) and the nominal subsystem (\ref{nominal subsystem}), at triggering instant $k^t_i$, they are controlled by the identical $l$-step open-loop control action sequence $\{ u_i(k^t_i|k^t_i), u_i(k^t_i+1|k^t_i), ..., u_i(k^t_i+l-1|k^t_i) \}$. Then, the predicted state error between the actual state $x_i(k^t_i+l)$ and the predicted state $\hat{x}_i(k^t_i+l|k^t_i)$ conforms to the following condition:
	\begin{eqnarray}\label{Lemma: xe norm bound 1}
		\left\| x_i(k^t_i+l) - \hat{x}_i(k^t_i+l|k^t_i) \right\|_{P_i} \leq \Gamma_{P_i}(l),
	\end{eqnarray}
	where $\Gamma_{\ast}(l)=\frac{\eta_i \bar{\lambda}(\sqrt{\ast})}{L_i} \left[(1+L_i)^l-1\right]$. Furthermore, we have
	\begin{eqnarray}\label{Lemma: xe norm bound 2}
		\Gamma_{P_i}(l) \leq l \eta_i \bar{\lambda}(\sqrt{P_i}) (1+L_i)^{l-1}. \quad
	\end{eqnarray}
\end{lemma}

For each nominal subsystem (\ref{nominal subsystem}), the optimization problem at triggered instant $k^t_i$ is formulated as follows:
\begin{eqnarray}\label{OCP problem}
  \mathcal{P}_i:
  \min_{\hat{\mathbf{u}}_i(k^t_i)} J_i (x_i(k^t_i), \hat{\mathbf{u}}_i(k^t_i), \hat{\mathbf{x}}^{ref}_{\mathcal{N}_i}(k^t_i), N_{k^t_i} ),
\end{eqnarray}
subject to
\begin{numcases}{}
	\hat{x}_i(k^t_i+l+1|k^t_i) & \nonumber \\
	\label{OCP constraint: agent i dynamic} & \hspace{-6.6cm}	 $=f_i (\hat{x}_i(k^t_i+l|k^t_i), \hat{u}_i(k^t_i+l|k^t_i)), \; l \in [0, N_{k^t_i}-1]$,  \\
	\label{OCP constraint: current state}
	 \hat{x}_i(k^t_i|k^t_i) = x_i(k^t_i), \\
	\label{OCP constraint: control constraint}
	 \hat{u}_i(k^t_i+l|k^t_i) \in \mathcal{U}_i, \; l \in [0, N_{k^t_i}-1], \\	
	\label{OCP constraint: state constraint}
	 \hat{x}_i(k^t_i+l|k^t_i) \in \mathcal{X}_i \ominus \mathcal{X}^e_i(l), \; l \in [1, N_{k^t_i} - 1], \\
	\label{OCP constraint: terminal constraint}
	 \hat{x}_i(k^t_i + N_{k^t_i}|k^t_i) \in \mathcal{X}^f_i, \\
	\label{OCP constraint: egoistic Lyapunov bounded}
	 J^s_i (x_i(k^t_i), \hat{\mathbf{u}}_i(k^t_i), N_{k^t_i} ) \leq \gamma_i(k^t_i),
\end{numcases}
where $\mathcal{X}^e_i(l) = \{x^e_i: \left\| x^e_i \right\|_{P_i} \leq l \eta_i \bar{\lambda}(\sqrt{P_i}) (1+L_i)^{l-1} \}$ is the robust tightened constraint set, $ \mathcal{X}^f_i = \{ \hat{x}_i: \left\| \hat{x}_i \right\|^2_{P_i} \leq f^2_i \}$ is the terminal set, and the constraint (\ref{OCP constraint: egoistic Lyapunov bounded}) is introduced to guarantee the stability of the system. 

At $k^t_i$, after solving $\mathcal{P}_i$, suppose the optimal control solution for agent $i$ at $k^t_i$ is given by $\hat{\mathbf{u}}^{\star}_i(k^t_i) = \{\hat{u}^{\star}_i(k^t_i | k^t_i), \hat{u}^{\star}_i(k^t_i+1 | k^t_i), \dots, \hat{u}^{\star}_i(k^t_i + N_{k^t_i}-1 | k^t_i) \}$. The optimal state sequence is $\hat{\mathbf{x}}^{\star}_i(k^t_i) = \{\hat{x}^{\star}_i(k^t_i | k^t_i), \hat{x}^{\star}_i(k^t_i+1 | k^t_i), \dots, \hat{x}^{\star}_i(k^t_i + N_{k^t_i}| k^t_i) \}$ with $\hat{x}^{\star}_i(k^t_i | k^t_i) = x_i(k^t_i)$. The concrete $\gamma_i(k^t_i)$ is determined in pursuance of the interval between the current trigger instant and the last  trigger instant, i.e.,

1) if $H_{k^{t-1}_i} = 1$:
\begin{eqnarray*}
	\gamma_i(k^t_i)
	&=& J^{s\star}_i(k^{t-1}_i) + \Upsilon(k^{t-1}_i)   \\
	&& - \left\| x_i(k^{t-1}_i) \right\|^2_{Q_i} - \left\| \hat{u}^{\star}_i(k^{t-1}_i | k^{t-1}_i) \right\|^2_{R_i},
\end{eqnarray*}
where $J^{s\star}_i(k^{t-1}_i) = J^s_i (x_i(k^{t-1}_i), \hat{\mathbf{u}}^{\star}_i(k^{t-1}_i), N_{k^{t-1}_i} )$ is composed of the optimal control sequence $\hat{\mathbf{u}}^{\star}_i(k^{t-1}_i)$ at last time step.

2) if $H_{k^{t-1}_i} > 1$:
\begin{eqnarray*}
	\gamma_i(k^t_i) 
	&=& \bar{J}^s_i(k^t_i-1) + \Lambda(H_{k^{t-1}_i}) \\
	&& - \left\| x_i( k^t_i - 1 ) \right\|^2_{Q_i} - \left\| \hat{u}^{\star}_i(k^t_i - 1 | k^{t-1}_i) \right\|^2_{R_i},
\end{eqnarray*}
where $\bar{J}^s_i(k^t_i-1) = J^s_i (x_i(k^t_i-1), \hat{\mathbf{u}}_i(k^t_i-1), N_{k^{t-1}_i} )$ consists of the candidate control sequence $\hat{\mathbf{u}}_i(k^t_i-1)$ which will be expatiated in Subsection \ref{Subsection: Recursive feasibility}.

\begin{remark}
	It can be seen that the actual state at the previous step needs to be measured to construct $J^{s\star}_i(k^{t-1}_i)$ or $J^s_i(k^t_i-1)$ of the constraint (\ref{OCP constraint: egoistic Lyapunov bounded}). This is a compromise in order to obtain a distributed self-triggering algorithm that does not rely on neighbor information.
\end{remark}

\begin{figure}
  \centering
  \includegraphics[width=8.4cm]{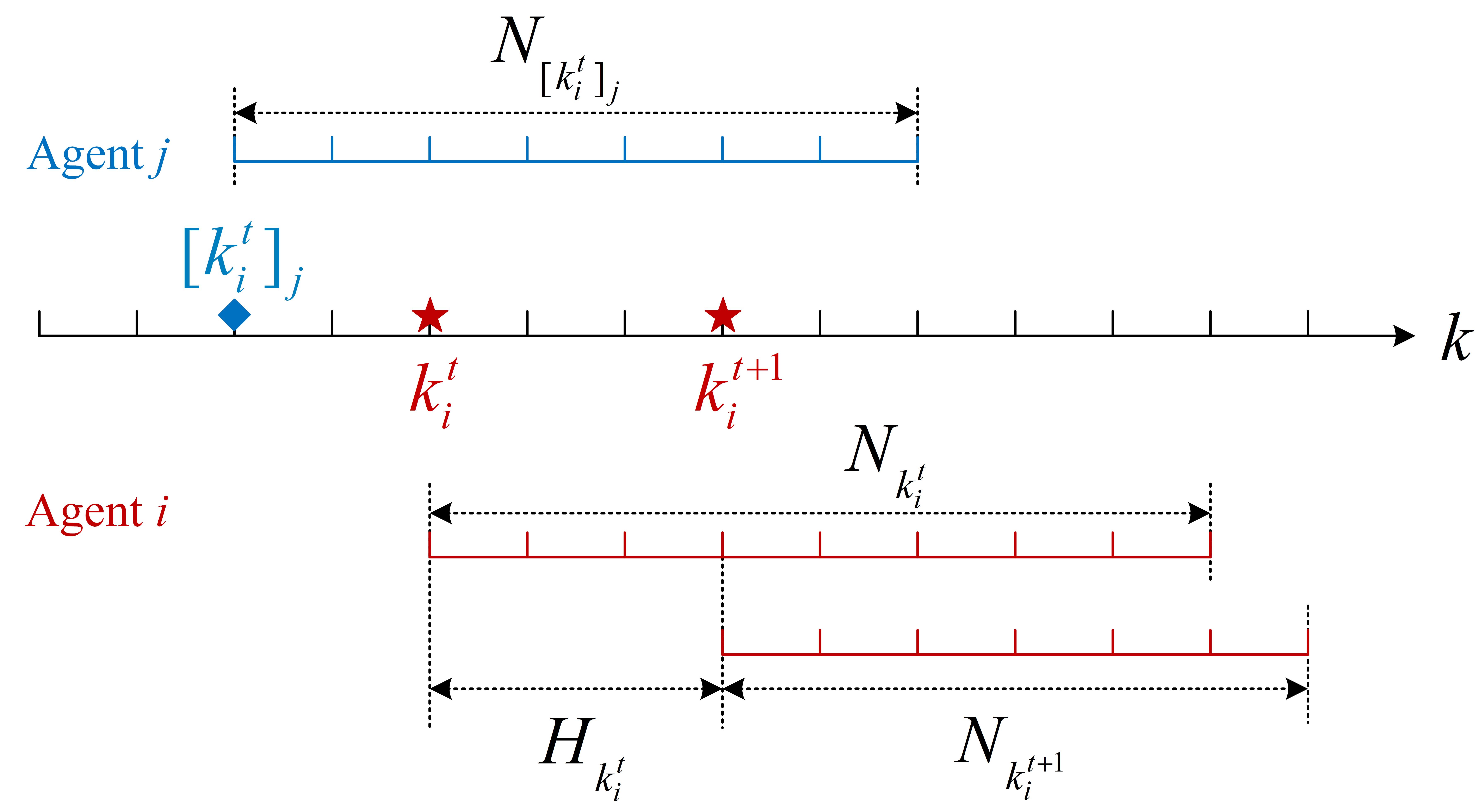}
  \caption{The relationships between $k^t_i$, $k^{t+1}_i$, $\left[k^t_i\right]_j$, $N_{k^t_i}$, $H_{k^t_i}$, and $N_{[k^t_i]_j}$.}
  \label{Fig: DMPC}
\end{figure}

\subsection{Asynchronous Information Aggregation}\label{}
The asynchronous communication scheme necessitates the estimated state trajectories of the neighbor agents.
We denote $[k^t_i]_j$ as the nearest triggering time of the neighbor agent $j$, $j\in\mathcal{N}_i$. The presumed state sequence $\hat{\mathbf{x}}^{ref}_j(k^t_i) = \{\hat{x}^{ref}_i(k^t_i | k^t_i), \hat{x}^{ref}_i(k^t_i+1 | k^t_i), \dots, \hat{x}^{ref}_i(k^t_i + N_{k^t_i} | k^t_i) \}$ can be classified into the following three situations:

\textit{Case 1:} $[k^t_i]_j + N_{[k^t_i]_j} \leq k^t_i$, which means that the neighbor $j$ has entered into the terminal set. The initial presumed state is given by
\begin{eqnarray*}
  \hat{x}^{ref}_j(k^t_i | k^t_i) =
  \hat{x}^{\star}_j\left([k^t_i]_j + N_{[k^t_i]_j} | [k^t_i]_j \right),
\end{eqnarray*}
and for $l \in [0, N_{k^t_i} - 1]$, the presumed control is
\begin{eqnarray*}
  \hat{u}^{ref}_j(k^t_i+l| k^t_i) =
  K_j \hat{x}^{ref}_j(k^t_i+l| k^t_i).
\end{eqnarray*}

\textit{Case 2:} $k^t_i < [k^t_i]_j + N_{[k^t_i]_j} \leq k^t_i + N_{k^t_i}$, which means that the neighbor $j$ provides segmental state sequence. The initial presumed state is defined as
\begin{eqnarray*}
  \hat{x}^{ref}_j(k^t_i | k^t_i) =
  \hat{x}^{\star}_j\left(k^t_i | [k^t_i]_j \right),
\end{eqnarray*}
and the presumed control is piece-wise as
\begin{eqnarray*}
  && \hat{u}^{ref}_j(k^t_i+l| k^t_i) \\
  && =
  \begin{cases}
    \hat{u}^{\star}_j (k^t_i | [k^t_i]_j), & l \in [0, \; [k^t_i]_j + N_{[k^t_i]_j} - k^t_i ], \\
    K_j \hat{x}^{ref}_j (k^t_i+l| k^t_i), & l \in [ [k^t_i]_j + N_{[k^t_i]_j} - k^t_i, \; N_{k^t_i} - 1 ].
  \end{cases}
\end{eqnarray*}

\textit{Case 3:} $[k^t_i]_j + N_{[k^t_i]_j} > k^t_i + N_{k^t_i}$, which means that the length of the state sequence sent by neighbor $j$ exceeds the length required by agent  $i$. The initial presumed state is
\begin{eqnarray*}
  \hat{x}^{ref}_j \left(k^t_i | k^t_i \right) =
  \hat{x}^{\star}_j \left(k^t_i | [k^t_i]_j \right),
\end{eqnarray*}
and for $l \in [0, N_{k^t_i}-1]$, the presumed control is
\begin{eqnarray*}
  \hat{u}^{ref}_j(k^t_i+l| k^t_i) =
  \hat{u}^{\star}_j\left(k^t_i | [k^t_i]_j \right).
\end{eqnarray*}

The specific state sequence of the neighbor agent $j$ can be established according to the following equation:
\begin{eqnarray}\label{agent j dynamic}\hspace*{-0.4cm}
	\hat{x}^{ref}_j(k^t_i+l+1|k^t_i) = f_j (\hat{x}^{ref}_j(k^t_i+l|k^t_i), \hat{u}^{ref}_j(k^t_i+l|k^t_i) ).
\end{eqnarray}

The relationship between the above primary variables is shown in Fig. \ref{Fig: DMPC}.

\subsection{Self-Triggered Generator}\label{Subsection: Generator}
The inter-event time $H_{k^t_i}$ introduced in (\ref{next triggering time}) is available by the following self-triggered generator:
\begin{eqnarray}\label{Generator}
	H_{k^t_i} = \min \left\{ H_{1}, \; H_{f_1}, \; H_{f_2}, \; H_{s} \right\},
\end{eqnarray}
with
\begin{eqnarray}
	\label{Generator: H=1}
	H_{1} &\!=\!&
	\begin{cases}
		1, & \Upsilon(k^t_i) \!>\! \sigma_i (\left\| x_i(k^t_i) \right\|^2_{Q_i} \!+\! \left\| \hat{u}^{\star}_i(k^t_i|k^t_i) \right\|^2_{R_i}), \\
		N_{k^t_i}, & else,
	\end{cases} \\
	\label{Generator: Feasibility1}
	H_{f_1} &\!=\!& \left\{ \sup H_{k^t_i}: \Phi_{P_i}(H_{k^t_i}) \leq r_i - f_i \right\}, \\
	\nonumber\label{Generator: Feasibility2}
	H_{f_2} &\!=\!& \! \left\{ \! \sup  H_{k^t_i}: \left( \sqrt{ 1 - \rho_i} \right)^{H_{k^t_i} - \bar{N}(H_{k^t_i})} \leq \frac{f_i}{f_i + \Phi_{P_i}(H_{k^t_i}) } \right\}, \\
	\\
	\label{Generator: Stability}
	\nonumber H_{s} &\!=\!& \Big\{ \sup H_{k^t_i}: \Lambda (H_{k^t_i}) \leq \sigma_i \Big ( \Theta_{Q_i}(H_{k^t_i} - 1)^2 \\
	&& \qquad \qquad \quad + \left\| \hat{u}^{\star}_i(k^{t}_i + H_{k^t_i} - 1 | k^t_i) \right\|^2_{R_i} \Big) \Big\},
\end{eqnarray}
where $0<\sigma_i<1$ is the triggering factor, $\rho_i = \underline{\lambda}(\overline{Q}_i) / \overline{\lambda}(P_i)$, $\Theta_{Q_i}(l) \triangleq \max \left\{\| \hat{x}^{\star}_i(k^t_i+l|k^t_i) \|_{Q_i} - \Gamma_{Q_i}(l), 0  \right\}$, $\Phi_{P_i}(H_{k^t_i}) = \Gamma_{P_i}(H_{k^t_i}) (1+L_i)^{N_{k^t_i} - H_{k^t_i}}$. To facilitate writing, we denote
\begin{eqnarray*}
	\Gamma_{\ast}(l) &=& \frac{\eta_i \bar{\lambda}(\sqrt{\ast})}{L_i} \left[(1+L_i)^l-1\right], \\
	\Xi_{\ast}(l) &=& \eta_i \bar{\lambda}(\sqrt{\ast}) (1+L_i)^l, \\
    \Psi_{\ast}(H_{k^t_i},l) &=& \eta_i \bar{\lambda}(\sqrt{\ast}) (1+L_i)^{N_{k^t_i} - H_{k^t_i}} (1+L_{ri})^{l}, \\
    \Omega_{\ast}(H_{k^t_i},l) &=& \Gamma_{\ast}(H_{k^t_i} - 1) (1+L_i)^{N_{k^t_i} - H_{k^t_i} + 1} (1+ L_{ri})^{l},
\end{eqnarray*}
and
\begin{eqnarray*}
	\Upsilon(k^t_i) &=& {\textstyle \sum_{l=0}^{N_{k^t_i} - 1}} \Big\{ \Xi_{Q_i}(l)^2 + 2 \Xi_{Q_i}(l) \left\| \hat{x}^{\star}_i(k^t_i + 1 + l|k^t_i) \right\|_{Q_i} \Big\}\\
	&& + \Xi_{P_i}(N_{k^t_i} - 1)^2 + 2 \Xi_{P_i}(N_{k^t_i} - 1) f_i, \\
	\Lambda(H_{k^t_i}) &=& \Lambda_2 (H_{k^t_i}) + \Lambda_3 (H_{k^t_i}) + \Lambda_4 (H_{k^t_i}), \\
	\Lambda_2 (H_{k^t_i}) &=& {\textstyle \sum_{l=0}^{N_{k^t_i} - H_{k^t_i} - 1} }
	\Big\{ \Xi_{Q_i}(l)^2 + 2 \Xi_{Q_i}(l) \times \\
	\nonumber&& \Big[\| \hat{x}^{\star}_i(k^t_i + H_{k^t_i} + l | k^t_i) \|_{Q_i} \\
	&&\qquad\qquad\quad\quad + \Gamma_{P_i}(H_{k^t_i} - 1) (1+L_i)^{l+1} \Big] \Big\}, \\
	\Lambda_3 (H_{k^t_i}) &=& {\textstyle \sum_{l = 0}^{H_{k^t_i} - 2}}
	\Big\{ \Psi_{\overline{Q}_i}(H_{k^t_i}, l)^2  + 2 \Psi_{\overline{Q}_i}(H_{k^t_i}, l)  \times \\
	\nonumber&& \left[ \| \hat{x}^r_i(k^t_i + N_{k^t_i} + l |k^t_i) \|_{\overline{Q}_i} + \Omega_{\overline{Q}_i}(H_{k^t_i}, l) \right] \Big\}, \\
	\Lambda_4 (H_{k^t_i}) &=& \Xi_{P_i}(H_{k^t_i}, H_{k^t_i} - 1)^2 + 2 \Xi_{P_i}(H_{k^t_i}, H_{k^t_i} - 1) \times \\
	&& \Big[\| \hat{x}^r_i(k^t_i + N_{k^t_i} + H_{k^t_i} - 1 | k^t_i) \|_{P_i} \\
	&&\qquad\qquad\qquad\quad + \Omega_{P_i}(H_{k^t_i} - 1, H_{k^t_i} - 1) \Big].
\end{eqnarray*}

Furthermore, the  prediction horizon shrinkage $\bar{N}(H_{k^t_i})$ presented in (\ref{next prediction horizon}) is decided as
\begin{eqnarray}\label{prediction horizon shrinkage}
	\bar{N}(H_{k^t_i}) = \min \{ H_{k^t_i} - 1, \; N_{k^t_i} - \hat{N}_{k^t_i}  \},
\end{eqnarray}
with $\hat{N}_{k^t_i} = \inf \{ l: \hat{x}^{\star}_i(k^t_i + l | k^t_i) \in \mathcal{X}^f_i, l \in [0, N_{k^t_i}-1 ] \}$.

In general,  the proposed self-triggering DMPC algorithm for each subsystem $i\in \mathcal{M}$ is detailed in Algorithm 1.
\begin{algorithm}
\caption{Self-Triggered DMPC with adaptive prediction horizon (ST-H-DMPC)}\label{alg1}
\begin{algorithmic}[1]
  \State
	\textbf{Require:} For subsystem $i \in \mathcal{M}$, the matrix $Q_i$, $R_i$, $Q_{ij}$, the initial prediction horizon $N_0$, the terminal weighted matrix $\mathcal{P}_i$ and the local state feedback gain $K_i$, the justified terminal region parameters $r_i$ and $f_i$.
	\State
	\textbf{Initialization.} Set $k=0$, $k^t_i = 0$,  $N_{k^t_i} = N_0$;
	
	\While {the control action is not stopped}
		\State
		Sample the system state $x_i(k)$;
		
		\While{$x_i(k) \notin \mathcal{X}^r_i$}
			\If{$k = k^{t}_i$}
				\State
				Update the initial state $\hat{x}_i(k^t_i|k^t_i)$ as in (\ref{OCP constraint: current state});
				\State
				Receive the available information of neighbor subsystems: $[k^t_i]_j$, $N_{[k^t_i]_j}$, $\hat{\mathbf{u}}^{\star}_j([k^t_i]_j)$, $\hat{\mathbf{x}}^{\star}_j([k^t_i]_j)$;
				\State
				Construct $\hat{x}^{ref}_j\left(k^t_i+l|k^t_i\right)$ as in (\ref{agent j dynamic});
				\State
				Solve $\mathcal{P}_i$ to obtain $\hat{\mathbf{u}}^{\star}_i(k^t_i)$ and $\hat{\mathbf{x}}^{\star}_i(k^t_i)$;
				\State
				Determine $H_{k^t_i}$ and $\bar{N}(H_{k^t_i})$  as in (\ref{Generator}), (\ref{prediction horizon shrinkage});
				\State
				Send $k^t_i$, $N_{k^t_i}$, $\hat{\mathbf{u}}^{\star}_i(k^t_i)$, $\hat{\mathbf{x}}^{\star}_i(k^t_i)$ to the neighbors;
			\Else
				\State
				Apply control action $u_i^{\text{mpc}}(k)$ as in (\ref{open phase control});
			\EndIf
			\State
			$k^t_i \gets k^t_i + H_{k^t_i}$, $k \gets k+1$;
		\EndWhile
		
		\State
		Apply control action $u_i^{\text{local}}(k)$ as in (\ref{terminal control}).
	\EndWhile
\end{algorithmic}
\end{algorithm}

\begin{remark}
	The condition of ensuring $H_{k^t_i} \geq 1$ is implied in the trigger tuner (\ref{Generator}), that is, $H_{f_1}$ and $H_{f_2}$ are satisfied by default for $H_{k^t_i}=1$. The analogous prerequisite for this can be found in Theorem 6 of \cite{XDLX-2021}.
\end{remark}

\begin{remark}
	In Algorithm 1, when $k=0$, the estimated
	 neighbor information in line 7 are specified as $\hat{x}^{ref}_j\left(k^t_i+l|k^t_i\right) = x_j\left(k^t_i\right), l \in [0, N_0 - 1 ]$, and the constraint (\ref{OCP constraint: egoistic Lyapunov bounded}) is excluded.
\end{remark}

\section{Analysis}
In this section, we proceed to analyze the recursive feasibility and stability of the proposed self-triggered DMPC algorithm.

\subsection{Recursive Feasibility}\label{Subsection: Recursive feasibility}
To ensure the effectiveness of the algorithm, we have to guarantee $\mathcal{P}_i$ has at least a feasible solution for each agent at each triggering instant. We define a candidate control sequence
\begin{eqnarray}\label{feasible control sequence}
  \nonumber\hspace{-1cm}&& \hat{\mathbf{u}}_i(k^{t+1}_i) =
  \{\hat{u}_i(k^{t+1}_i | k^{t+1}_i), \hat{u}_i(k^{t+1}_i + 1 | k^{t+1}_i), \\
  \hspace{-1cm}&& \qquad\qquad\qquad\quad\quad \dots, \hat{u}_i(k^{t+1}_i + N_{k^{t+1}_i} - 1 | k^{t+1}_i) \},
\end{eqnarray}
where
\begin{eqnarray*}
	\hspace{-0.2cm}&&
		\hat{u}_i \left(k^{t+1}_i+l | k^{t+1}_i \right) \\
	\hspace{-0.6cm}&& \quad =
		\begin{cases}
			\hat{u}^{\star}_i\left(k^{t+1}_i + l | k^{t}_i \right), & l \in [0, N_{k^{t}_i} - H_{k^{t}_i} - 1], \\
			K_i \hat{x}_i \left(k^{t+1}_i + l | k^{t+1}_i \right), & l \in [N_{k^{t}_i} - H_{k^{t}_i}, N_{k^{t+1}_i} - 1].
		\end{cases}
\end{eqnarray*}

Then the associated candidate state sequence is
\begin{eqnarray}\label{feasible state sequence}
	\nonumber&& \hat{\mathbf{x}}_i(k^{t+1}_i) =
	\big\{\hat{x}_i(k^{t+1}_i | k^{t+1}_i), \hat{x}_i(k^{t+1}_i + 1 | k^{t+1}_i), \\
	&& \qquad\qquad\qquad\quad\qquad \dots, \hat{x}_i(k^{t+1}_i + N_{k^{t+1}_i} | k^{t+1}_i) \big\},
\end{eqnarray}
with $\hat{x}_i(k^{t+1}_i | k^{t+1}_i) = x_i(k^{t+1}_i)$. To be useful in Section \ref{Subsection: Stability}, we define an additional terminal sequence:
\begin{eqnarray}\label{terminal state sequence}
	\nonumber&& \hat{\mathbf{x}}^r_i(k^{t}_i) =
	\big\{
		\hat{x}^r_i(k^{t}_i + N_{k^t_i}| k^{t}_i),
		\hat{x}^r_i(k^{t}_i + N_{k^t_i} + 1 | k^{t}_i), \\
	&& \qquad\qquad\quad \dots,
		\hat{x}^r_i(k^{t+1}_i + N_{k^{t}_i} + H_{k^t_i} - 1 | k^{t+1}_i)
	\big\},
\end{eqnarray}
where $\hat{x}^r_i(k^{t}_i + N_{k^t_i}| k^{t}_i) = \hat{x}^{\star}_i(k^{t}_i + N_{k^t_i}| k^{t}_i)$, and for $l \in [1, H_{k^t_i} - 1]$: $\hat{x}^r_i(k^{t}_i + N_{k^t_i} + l | k^{t}_i) = f_i (\hat{x}^r_i(k^{t}_i + N_{k^t_i} + l - 1| k^{t}_i), K_i\hat{x}^r_i(k^{t}_i + N_{k^t_i} + l - 1| k^{t}_i))$. First, we give the following lemma to facilitate the subsequent derivation.

\begin{lemma}\label{Lemma: error norm between optimal and candidate}
	For a presumed trigger interval $H_{k^t_i} \in [1,...,N_{k^t_i}]$, the candidate state for $k^t_i + H_{k^t_i}$ instant and the optimal state for $k^t_i$ instant satisfy the following inequality for  $l \in [0, N_{k^t_i} - H_{k^t_i}]$:
	\begin{eqnarray*}
		&& \left\| \hat{x}_i(k^{t}_i + H_{k^t_i} + l | k^t_i + H_{k^t_i}) - \hat{x}^{\star}_i(k^{t}_i + H_{k^t_i} + l | k^t_i) \right\|_{P_i} \\
		&& \leq  \Gamma_{P_i}(H_{k^t_i}) (1+L_i)^l.
	\end{eqnarray*}
\end{lemma}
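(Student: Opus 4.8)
The plan is to recognize that the candidate trajectory opened at $k^{t+1}_i = k^t_i + H_{k^t_i}$ and the optimal trajectory opened at $k^t_i$ are both propagated through the identical nominal dynamics $f_i$ under one and the same input sequence, so that their separation is governed purely by a mismatch in initial conditions that Lemma \ref{Lemma: Gronwall} already controls. First I would observe, from the upper branch of the candidate control (\ref{feasible control sequence}), that $\hat{u}_i(k^t_i + H_{k^t_i} + l \mid k^{t+1}_i) = \hat{u}^{\star}_i(k^t_i + H_{k^t_i} + l \mid k^t_i)$ holds for every $l \in [0, N_{k^t_i} - H_{k^t_i} - 1]$, which is exactly the range of inputs needed to generate the states for $l \in [0, N_{k^t_i} - H_{k^t_i}]$ appearing in the claim.

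Writing $e_i(l) \triangleq \hat{x}_i(k^t_i + H_{k^t_i} + l \mid k^{t+1}_i) - \hat{x}^{\star}_i(k^t_i + H_{k^t_i} + l \mid k^t_i)$, the base case $l = 0$ reduces to $e_i(0) = x_i(k^{t+1}_i) - \hat{x}^{\star}_i(k^t_i + H_{k^t_i} \mid k^t_i)$, because the candidate trajectory is initialized with the measured state $\hat{x}_i(k^{t+1}_i \mid k^{t+1}_i) = x_i(k^{t+1}_i)$ by (\ref{feasible state sequence}). This is precisely the actual-versus-predicted state error accumulated over the horizon $H_{k^t_i}$, so Lemma \ref{Lemma: Gronwall}, with its index replaced by $H_{k^t_i}$, yields $\| e_i(0) \|_{P_i} \leq \Gamma_{P_i}(H_{k^t_i})$, matching the claimed bound at $l = 0$ since $(1+L_i)^0 = 1$.

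For the inductive step I would exploit that both states advance by $f_i$ under the common input: using the decomposition $f_i(x_1,u) - f_i(x_2,u) = (x_1 - x_2) + \left[ g_i(x_1,u) - g_i(x_2,u) \right]$ together with the Lipschitz bound on $g_i$ from Assumption \ref{Assumption: Lipschitz}, one obtains the one-step growth $\| e_i(l+1) \| \leq (1+L_i)\| e_i(l) \|$, and hence $\| e_i(l) \|_{P_i} \leq (1+L_i)^l \| e_i(0) \|_{P_i}$. Combining this with the base-case estimate delivers $\| e_i(l) \|_{P_i} \leq (1+L_i)^l \Gamma_{P_i}(H_{k^t_i})$, which is the assertion.

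The step I expect to require the most care is pushing the per-step factor $(1+L_i)$ through the $P_i$-weighted norm, since Assumption \ref{Assumption: Lipschitz} supplies the Lipschitz estimate only in the unweighted Euclidean norm. I would resolve this either by running the recursion in the unweighted norm and converting via $\| z \|_{P_i} \leq \bar{\lambda}(\sqrt{P_i})\| z \|$ in the same manner that the factor $\bar{\lambda}(\sqrt{P_i})$ is already absorbed into $\Gamma_{P_i}$, mirroring the proof of Lemma \ref{Lemma: Gronwall}, or by checking directly that the $g_i$-increment remains compatible with the weighted norm. A secondary item to verify is the index bookkeeping: the claimed state range $l \in [0, N_{k^t_i} - H_{k^t_i}]$ is consistent with the coincident-input range $[0, N_{k^t_i} - H_{k^t_i} - 1]$, because the state at the terminal index $l = N_{k^t_i} - H_{k^t_i}$ is produced from inputs whose last index is $N_{k^t_i} - H_{k^t_i} - 1$, all lying within the range where the candidate and optimal inputs agree.
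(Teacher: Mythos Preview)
Your proposal is correct and matches the paper's argument: the paper also first invokes Lemma \ref{Lemma: Gronwall} to bound $\|e_i(0)\|_{P_i}\le\Gamma_{P_i}(H_{k^t_i})$ and then propagates this by the same one-step Lipschitz recursion, phrased there as the discrete Gronwall sum $\|e_i(l)\|_{P_i}\le\|e_i(0)\|_{P_i}+\sum_{\tau=0}^{l-1}L_i\|e_i(\tau)\|_{P_i}$ rather than the equivalent induction you write. Your caveat about passing the factor $(1+L_i)$ through the $P_i$-weighted norm is well taken; the paper applies the Lipschitz bound directly in the $P_i$-norm without comment, so you are in fact being more careful on that point than the original proof.
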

\begin{proof}
	It follows from (\ref{Lemma: xe norm bound 1}) that $\| x_i (k^{t}_i + H_{k^t_i}) - \hat{x}^{\star}_i (k^{t}_i + H_{k^t_i} | k^{t}_i ) \|_{P_i} \leq \Gamma_{P_i}(H_{k^t_i})$. From the special Gronwall inequality \cite{H-2009}, we have
	\begin{eqnarray*}
		&& \left\| \hat{x}_i(k^{t}_i + H_{k^t_i} + l | k^{t}_i + H_{k^t_i}) - \hat{x}^{\star}_i(k^{t}_i + H_{k^t_i} + l | k^t_i) \right\|_{P_i} \\
		&\leq&  \left\| \hat{x}_i (k^{t}_i + H_{k^t_i}| k^{t}_i + H_{k^t_i}) - \hat{x}^{\star}_i (k^{t}_i + H_{k^t_i} | k^{t}_i ) \right\|_{P_i} \\
		&\quad& \!+\!\! \sum_{\tau=0}^{l-1} L_i \! \left\| \hat{x}_i (k^{t}_i \!+\! H_{k^t_i} \!+\! \tau | k^{t}_i \!+ H_{k^t_i} ) - \hat{x}^{\star}_i \! (k^{t}_i \!+ H_{k^t_i} \!+\! \tau | k^{t}_i ) \right\|_{P_i} \\
		&\leq&  \Gamma_{P_i}(H_{k^t_i}) (1+L_i)^l.
	\end{eqnarray*}
\end{proof}

\begin{theorem}
  For subsystem $i$, suppose that $\mathcal{P}_i$ has a solution at the initial time $k^0_i$. Then by the proposed Algorithm \ref{alg1}, the $\mathcal{P}_i$ is recursively feasible if $\mathcal{X}^r_i \subseteq \mathcal{X}_i \ominus \mathcal{X}^e_i(N_0)$.
\end{theorem}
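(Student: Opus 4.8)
The plan is to argue by induction on the trigger index $t$. Assuming $\mathcal{P}_i$ admits an optimal solution $\hat{\mathbf{u}}^\star_i(k^t_i)$ at $k^t_i$, I would show that the shifted candidate $\hat{\mathbf{u}}_i(k^{t+1}_i)$ of (\ref{feasible control sequence}), together with its trajectory (\ref{feasible state sequence}) generated from $\hat{x}_i(k^{t+1}_i|k^{t+1}_i)=x_i(k^{t+1}_i)$, is feasible for $\mathcal{P}_i$ at $k^{t+1}_i=k^t_i+H_{k^t_i}$. Constraints (\ref{OCP constraint: agent i dynamic}) and (\ref{OCP constraint: current state}) then hold by construction, so it remains to check the input bound (\ref{OCP constraint: control constraint}), the tightened state bound (\ref{OCP constraint: state constraint}), the terminal bound (\ref{OCP constraint: terminal constraint}), and the egoistic-cost bound (\ref{OCP constraint: egoistic Lyapunov bounded}). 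The base case is the assumed feasibility at $k^0_i$.

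For the input and state constraints I would split the candidate index at $l=N_{k^t_i}-H_{k^t_i}$. On the first block the candidate reuses the tail of the previous optimal input, hence lies in $\mathcal{U}_i$, while its state deviates from the previously optimal one by at most $\Gamma_{P_i}(H_{k^t_i})(1+L_i)^l$ by Lemma \ref{Lemma: error norm between optimal and candidate}. Since $\hat{x}^\star_i(k^t_i+H_{k^t_i}+l|k^t_i)\in\mathcal{X}_i\ominus\mathcal{X}^e_i(H_{k^t_i}+l)$, feasibility of the candidate state reduces to the nesting inequality $\beta_i(l)+\Gamma_{P_i}(H_{k^t_i})(1+L_i)^l\le\beta_i(H_{k^t_i}+l)$ with $\beta_i(m)=m\eta_i\bar\lambda(\sqrt{P_i})(1+L_i)^{m-1}$ the $P_i$-radius of $\mathcal{X}^e_i(m)$; writing $a=1+L_i$ this collapses to $l+\sum_{j=1}^{H_{k^t_i}}a^j\le(H_{k^t_i}+l)a^{H_{k^t_i}}$, which holds termwise because $a\ge1$. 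On the second block the candidate uses $K_i$; here I would invoke the hypothesis $\mathcal{X}^r_i\subseteq\mathcal{X}_i\ominus\mathcal{X}^e_i(N_0)$ together with the horizon monotonicity $N_{k^{t+1}_i}\le N_0$ of (\ref{next prediction horizon}) and $\mathcal{X}^e_i(l)\subseteq\mathcal{X}^e_i(N_0)$, so that any state kept inside $\mathcal{X}^r_i$ by the invariance of Assumption \ref{Assumption: terminal} automatically satisfies both (\ref{OCP constraint: control constraint}) (via part~1) and (\ref{OCP constraint: state constraint}).

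The terminal constraint is where the feasibility triggers $H_{f_1}$ and $H_{f_2}$ enter. Taking $l=N_{k^t_i}-H_{k^t_i}$ in Lemma \ref{Lemma: error norm between optimal and candidate}, the candidate state at absolute time $k^t_i+N_{k^t_i}$ lies within $\Phi_{P_i}(H_{k^t_i})=\Gamma_{P_i}(H_{k^t_i})(1+L_i)^{N_{k^t_i}-H_{k^t_i}}$ of the previous terminal state, hence has $P_i$-norm at most $f_i+\Phi_{P_i}(H_{k^t_i})$; condition $H_{f_1}$, i.e. $\Phi_{P_i}(H_{k^t_i})\le r_i-f_i$, places it inside the invariant set $\mathcal{X}^r_i$. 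Applying $K_i$ for the remaining $H_{k^t_i}-\bar N(H_{k^t_i})$ steps, Assumption \ref{Assumption: terminal}(3) gives the contraction $V^f_i(\hat x(k+1))\le(1-\rho_i)V^f_i(\hat x(k))$ with $\rho_i=\underline\lambda(\overline Q_i)/\overline\lambda(P_i)$, so the candidate terminal state has $P_i$-norm at most $(\sqrt{1-\rho_i})^{H_{k^t_i}-\bar N(H_{k^t_i})}(f_i+\Phi_{P_i}(H_{k^t_i}))$; condition $H_{f_2}$ is exactly the inequality bounding this by $f_i$, placing the terminal state in $\mathcal{X}^f_i$.

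The final and most delicate step is the egoistic bound (\ref{OCP constraint: egoistic Lyapunov bounded}). Here I would expand $J^s_i$ of the candidate as the previous optimal cost minus the stage costs of the $H_{k^t_i}$ consumed steps, plus the cost of the appended $K_i$-steps, plus the cross terms produced by replacing the nominal predicted states with the disturbed actual states. Bounding each cross term through Lemma \ref{Lemma: Gronwall} and Lemma \ref{Lemma: error norm between optimal and candidate} — the $P_i$-weighted error estimates feeding the factors $\Xi$, $\Psi$, $\Omega$ — one checks that the excess is majorized by $\Upsilon(k^t_i)$ when $H_{k^t_i}=1$ and by $\Lambda(H_{k^t_i})$ when $H_{k^t_i}>1$, which matches the definition of $\gamma_i(k^{t+1}_i)$ and yields $J^s_i(\text{candidate})\le\gamma_i(k^{t+1}_i)$. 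I expect this bookkeeping to be the main obstacle, since it must account for every quadratic cross term under the $P_i$- and $Q_i$-weighted error bounds, handle the optimal-tail and local-feedback blocks separately, and verify that the generator's $\Upsilon$ and $\Lambda$ expressions dominate them. Once all four constraint families are verified, the candidate is admissible at $k^{t+1}_i$, and induction completes the proof.
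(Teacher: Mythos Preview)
Your proposal is correct and follows essentially the same approach as the paper: the same shifted candidate, the same split at $l=N_{k^t_i}-H_{k^t_i}$, the same use of Lemma~\ref{Lemma: error norm between optimal and candidate} together with the triggers $H_{f_1}$ and $H_{f_2}$ for the terminal set, and the same deferral of the egoistic-cost bound to the $\Upsilon$/$\Lambda$ estimates developed in the stability analysis. The only cosmetic difference is that for the tightened state constraint the paper first invokes the bound~(\ref{Lemma: xe norm bound 2}) and then a Minkowski-sum inclusion, whereas you verify the required radius inequality $\beta_i(l)+\Gamma_{P_i}(H_{k^t_i})(1+L_i)^l\le\beta_i(H_{k^t_i}+l)$ directly; both arguments are equivalent.
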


\begin{proof}
It is obvious that the candidate control sequence (\ref{feasible control sequence}) complies with the constraints (\ref{OCP constraint: agent i dynamic})-(\ref{OCP constraint: current state}). The condition (\ref{OCP constraint: control constraint}) is established based on Assumption \ref{Assumption: terminal}. We then prove the satisfaction of (\ref{OCP constraint: state constraint})-(\ref{OCP constraint: egoistic Lyapunov bounded}) successively.

Firstly, we have to prove $\hat{x}_i(k^{t+1}_i + N_{k^{t+1}_i} | k^{t+1}_i) \in \mathcal{X}^f_i$. For the prediction interval $[1, N_{k^t_i} - H_{k^t_i}]$, Lemma \ref{Lemma: error norm between optimal and candidate} presents the relationship between $\hat{x}_i (k^{t+1}_i+l | k^{t+1}_i )$ and $\hat{x}^{\star}_i (k^{t+1}_i+l | k^{t}_i )$. When $l = N_{k^t_i} - H_{k^t_i}$,
we have
\begin{eqnarray*}
	\left\| \hat{x}_i(k^t_i + N_{k^t_i} | k^{t+1}_i) - \hat{x}^{\star}_i( k^t_i + N_{k^t_i} | k^t_i) \right\|_{P_i} \leq \Phi_{P_i}(H_{k^t_i}).
\end{eqnarray*}

Following $\| \hat{x}^{\star}_i (k^{t}_i + N_{k^t_i} | k^{t}_i) \|_{P_i} \leq f_i$ and the self-triggered generator (\ref{Generator: Feasibility1}), we get
\begin{eqnarray}\hspace{-0.6cm}\label{Theorem proof: r terminal set}
   \left\| \hat{x}_i (k^{t}_i +  N_{k^t_i} | k^{t+1}_i ) \right\|_{P_i} \leq f_i + \Phi_{P_i}(H_{k^t_i}) \leq r_i,
\end{eqnarray}
which implies that $\hat{x}_i (k^{t}_i + N_{k^t_i} | k^{t+1}_i )$ is steered to the terminal region $\mathcal{X}^r_i$. As for the prediction interval $[N_{k^t_i} - H_{k^t_i} + 1, N_{k^{t+1}_i} - 1 ]$, upon using Assumption \ref{Assumption: terminal} yields
\begin{eqnarray*}
	&& \left\| \hat{x}_i (k^{t+1}_i + N_{k^{t+1}_i} | k^{t+1}_i ) \right\|^2_{P_i} - \left\| \hat{x}_i  (k^{t+1}_i + N_{k^{t+1}_i} - 1 | k^{t+1}_i ) \right\|^2_{P_i} \\
	&& \leq - \left\| \hat{x}_i (k^{t+1}_i + N_{k^{t+1}_i} - 1 | k^{t+1}_i ) \right\|^2_{Q_{Ki}} \\
	&& \leq - \rho_i \left\| \hat{x}_i (k^{t+1}_i + N_{k^{t+1}_i} - 1 | k^{t+1}_i ) \right\|^2_{P_i}.
\end{eqnarray*}

Furthermore, we have
\begin{eqnarray*}
    &&
    \left\| \hat{x}_i(k^{t+1}_i + N_{k^{t+1}_i} | k^{t+1}_i) \right\|^2_{P_i} \\
    && \leq
    \left(1-\rho_i\right) \left\| \hat{x}_i (k^{t+1}_i + N_{k^{t+1}_i} - 1 | k^{t+1}_i ) \right\|^2_{P_i} \\
    && \leq
    \left(1-\rho_i\right)^{H_{k^t_i} - \bar{N}(H_{k^t_i}) } \left\| \hat{x}_i (k^{t}_i + N_{k^t_i} | k^{t+1}_i) \right\|^2_{P_i}.
 \end{eqnarray*}

By following the former inequality relation in (\ref{Theorem proof: r terminal set}) with less conservativeness and the self-triggered condition (\ref{Generator: Feasibility2}), one has
\begin{eqnarray*}
    &&
    \left\| \hat{x}_i(k^{t+1}_i + N_{k^{t+1}_i} | k^{t+1}_i) \right\|^2_{P_i} \\
    && \leq
    \left(1-\rho_i\right)^{H_{k^t_i} - \bar{N}(H_{k^t_i}) } \left[ f_i + \Phi_{P_i}(H_{k^t_i}) \right]^2 \leq f^2_i,
\end{eqnarray*}
which means that the candidate $\hat{u}_i \left(k^{t+1}_i+l | k^{t+1}_i \right)$ renders satisfaction of the terminal constraint (\ref{OCP constraint: terminal constraint}).

Secondly, we turn to the establishment of $\hat{x}_i(k^{t+1}_i+l|k^{t+1}_i) \in \mathcal{X}_i \ominus \mathcal{X}^e_i(l)$. Since we have $\| \hat{x}_i (k^{t+1}_i+l | k^{t+1}_i) \|_{P_i} \leq \| \hat{x}^{\star}_i (k^{t+1}_i+l | k^{t}_i) \|_{P_i} + \Gamma_{P_i}(H_{k^t_i}) (1+L_i)^l$, for $l \in [1, N_{k^t_i} - H_{k^t_i}]$, the following description holds within this interval:
\begin{eqnarray*}
    \hat{x}_i \left(k^{t+1}_i+l | k^{t+1}_i \right) 
    &\in& \mathcal{X}_i \ominus \mathcal{X}^e_i(l + H_{k^t_i}) \\
    &\oplus& \mathcal{B}_{P_i} \left( \Gamma_{P_i}(H_{k^t_i}) (1+L_i)^l \right).
\end{eqnarray*}

Recall the relationship (\ref{Lemma: xe norm bound 2}), we have
\begin{eqnarray*}
	&& \mathcal{B}_{P_i} \left( \Gamma_{P_i}(H_{k^t_i}) (1+L_i)^l \right) \\ 
	&& \subseteq \mathcal{B}_{P_i} \left( H_{k^t_i} \eta_i \bar{\lambda}(\sqrt{P_i}) (1+L_i)^{l + H_{k^t_i} - 1} \right).
\end{eqnarray*}

From the definition of $\mathcal{X}^e_i(l + H_{k^t_i})$, we can conclude
\begin{eqnarray*}
	\hat{x}_i \left(k^{t+1}_i+l | k^{t+1}_i \right) \in \mathcal{X}_i \ominus \mathcal{X}^e_i(l).
\end{eqnarray*}

For the interval $[N_{k^t_i} - H_{k^t_i} + 1, N_{k^{t+1}_i} - 1]$, the state is constrained within the terminal domain on the basis of Assumption \ref{Assumption: terminal}. Furthermore, it follows that $\hat{x}_i(k^{t+1}_i+l|k^{t+1}_i) \in \mathcal{X}^r_i \subseteq \mathcal{X}_i \ominus \mathcal{X}^e_i(N_0) \subseteq \mathcal{X}_i \ominus \mathcal{X}^e_i(l)$.

Finally, we proceed to confirm the feasibility of the constraint $\bar{J}^s_i (k^{t+1}_i) \leq \gamma_i(k^{t+1}_i)$.
For $H_{k^t_i} = 1$, one has $N_{k^{t+1}_i} = N_{k^{t}_i}$ according to (\ref{prediction horizon shrinkage}). Then $\bar{J}^s_i(k^{t}_i + 1)$ can be calculated as 
\begin{eqnarray*}
	\bar{J}^s_i(k^{t}_i + 1)
	&=& \bar{J}^s_i (x_i(k^{t}_i + 1), \hat{\mathbf{u}}_i(k^{t}_i + 1), N_{k^{t}_i} ) \\
	&\leq&  J^{s\star}_i (k^{t}_i) + \Upsilon(k^{t}_i)   \\
	&& - \left\| x_i(k^{t}_i) \right\|^2_{Q_i} - \left\| \hat{u}^{\star}_i(k^{t}_i | k^{t}_i) \right\|^2_{R_i} \\
	&=& \gamma_i(k^t_i+1).
\end{eqnarray*}
For $H_{k^t_i} > 1$, we have
\begin{eqnarray*}
	\bar{J}^s_i(k^{t}_i + H_{k^t_i})
	&=& \bar{J}^s_i (x_i(k^{t}_i + H_{k^t_i}), \hat{\mathbf{u}}_i(k^{t}_i + H_{k^t_i}), N_{k^{t+1}_i} ) \\
	&\leq& \bar{J}^s_i (x_i(k^{t}_i + H_{k^t_i}), \hat{\mathbf{u}}_i(k^{t}_i + H_{k^t_i}), N_{k^{t}_i} ) \\
	&\leq&  \bar{J}^{s}_i (k^{t}_i + H_{k^t_i} - 1) + \Lambda(H_{k^{t}_i})   \\
	&& - \left\| x_i(k^{t}_i  + H_{k^t_i} - 1) \right\|^2_{Q_i} \\
	&& - \left\| \hat{u}^{\star}_i(k^{t}_i  + H_{k^t_i} - 1 | k^{t}_i) \right\|^2_{R_i} \\
	&=& \gamma_i(k^t_i + H_{k^t_i}).
\end{eqnarray*}

The above two inequalities will be specified in detail in (\ref{Theorem proof: DeltaJ final for H=1}) and (\ref{Theorem proof: DeltaJ final for H>1}) in the next subsection.
It shows that $J^s_i (k^{t+1}_i) \leq \gamma_i(k^{t+1}_i)$ is valid in both cases. This completes the proof.
\end{proof}

\subsection{Stability}\label{Subsection: Stability}
In this subsection, we show that the stability of the subsystem $i$ is guaranteed under the proposed self-triggered mechanism by considering the egoistic cost $J^s_i(k^t_i)$ as a Lyapunov function. 

\begin{remark}
	To achieve the trigger rule with stability assurance, we have to find a Lyapunov function that decreases over time. If we regard the total cost $J_i(k^t_i)$ as a Lyapunov candidate, the influences of the neighbor agents should be taken into account. However, at time instant $k^t_i$, we only have the information transmitted by the neighboring agents recently, which renders the stability-based self-triggered condition quite conservative. Therefore, the egoistic cost $J^s_i(k^t_i)$ is regarded as the Lyapunov function to achieve the balance between triggering conservativeness and solving performance. Since the objective function of the optimization problem $\mathcal{P}_i$ is the total cost $J_i(k^t_i)$ instead of the egoistic cost $J^s_i(k^t_i)$, $J^{s\star}_i(k^t_i) \leq \bar{J}^s_i(k^t_i)$ is not necessarily true. Therefore, the constraint (\ref{OCP constraint: egoistic Lyapunov bounded}) is introduced to guarantee the stability of each agent. Similar analysis can refer to \cite{KSD-2015}.
\end{remark}

Above all, we introduce two lemmas to facilitate the subsequent derivation.

\begin{lemma}\label{Lemma: error norm between two candidate}
	For a presumed trigger interval $H_{k^t_i}\in [2,...,N_{k^t_i}]$, the candidate states at time instants $k^t_i + H_{k^t_i}$ and $k^t_i + H_{k^t_i} - 1$ instant satisfy the following inequalities:

	$i)$ For  $l \in [0, N_{k^t_i} - H_{k^t_i}]$:
	\begin{eqnarray*}
		&& \left\| \hat{x}_i(k^{t}_i \!+ H_{k^t_i} \!+ l | k^t_i \!+ H_{k^t_i}) - \hat{x}_i(k^{t}_i \!+ H_{k^t_i} \!+ l | k^t_i \!+ H_{k^t_i} \!- 1) \right\|_{P_i} \\
		&& \leq \eta_i \bar{\lambda}(\sqrt{P_i}) (1+L_i)^l.
	\end{eqnarray*}

	$ii)$ For  $l \in [0, H_{k^t_i} - 1]$:
	\begin{eqnarray*}
		&& \left\| \hat{x}_i(k^{t}_i \!+ N_{k^t_i} \!+ l | k^t_i \!+ H_{k^t_i}) - \hat{x}_i(k^{t}_i \!+ N_{k^t_i} \!+ l | k^t_i \!+ H_{k^t_i} \!- 1) \right\|_{P_i} \\
		&& \leq \eta_i \bar{\lambda}(\sqrt{P_i}) (1+L_i)^{N_{k^t_i} - H_{k^t_i}} (1+L_{ri})^{l}.
	\end{eqnarray*}
\end{lemma}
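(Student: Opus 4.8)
The plan is to track how the mismatch between the two candidate trajectories, both inherited from the single optimal solution $\hat{\mathbf{u}}^{\star}_i(k^t_i)$ computed at the last trigger instant $k^t_i$, propagates forward in time. Both candidate sequences, the one associated with the trigger instant $s_1 := k^t_i + H_{k^t_i}$ and the one associated with the intermediate instant $s_2 := k^t_i + H_{k^t_i} - 1$, are built by appending the terminal controller $K_i\hat{x}$ to a shifted tail of $\hat{\mathbf{u}}^{\star}_i(k^t_i)$, so over their common open-loop segment they are driven by the \emph{same} control inputs $\hat{u}^{\star}_i(\cdot\,|\,k^t_i)$ and differ only through their initial states. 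The two assertions then follow by a one-step Gronwall recursion in the $P_i$-norm, with growth factor $1+L_i$ on the open-loop segment (part $i$) and $1+L_{ri}$ on the terminal segment (part $ii$), exactly as in the proof of Lemma \ref{Lemma: error norm between optimal and candidate}.

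First I would pin down the $l=0$ mismatch, which is the crux. Since $H_{k^t_i}\geq 2$, the step $s_2$ is not a trigger instant, so the control actually applied there is the open-loop value $u_i^{\mathrm{mpc}}(s_2)=\hat{u}^{\star}_i(s_2\,|\,k^t_i)$, which is precisely the first input of the $s_2$-candidate. Consequently the nominal propagation of the $s_2$-candidate reproduces the deterministic part of the true dynamics, and subtracting $\hat{x}_i(s_1\,|\,s_2)=f_i(x_i(s_2),\hat{u}^{\star}_i(s_2\,|\,k^t_i))$ from $\hat{x}_i(s_1\,|\,s_1)=x_i(s_1)=f_i(x_i(s_2),\hat{u}^{\star}_i(s_2\,|\,k^t_i))+\omega_i(s_2)$ leaves exactly the disturbance $\omega_i(s_2)$. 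Hence the initial mismatch is bounded in the $P_i$-norm by $\|\omega_i(s_2)\|_{P_i}\leq \eta_i\bar{\lambda}(\sqrt{P_i})$.

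For part $i$ I would then propagate this mismatch over the open-loop segment $l\in[0,N_{k^t_i}-H_{k^t_i}]$. On this range both candidates apply the identical input $\hat{u}^{\star}_i(\cdot\,|\,k^t_i)$ at each step, so the one-step error growth is governed solely by the state-Lipschitz property of $f_i$ (constant $L_i$); iterating the Gronwall recursion from the $l=0$ bound yields the factor $(1+L_i)^l$ and therefore the claimed bound $\eta_i\bar{\lambda}(\sqrt{P_i})(1+L_i)^l$. For part $ii$ I would restart the recursion at the handover index $k^t_i+N_{k^t_i}$, i.e.\ at $l=N_{k^t_i}-H_{k^t_i}$ of part $i$, where the mismatch is already bounded by $\eta_i\bar{\lambda}(\sqrt{P_i})(1+L_i)^{N_{k^t_i}-H_{k^t_i}}$. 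From this index onward both trajectories (extended via the terminal sequence $\hat{\mathbf{x}}^r_i(k^t_i)$ where they run past their nominal horizons) are driven by the closed-loop law $u=K_i\hat{x}$, so the appropriate growth constant becomes $L_{ri}$; iterating for $l$ further steps multiplies by $(1+L_{ri})^l$ and gives the stated bound $\eta_i\bar{\lambda}(\sqrt{P_i})(1+L_i)^{N_{k^t_i}-H_{k^t_i}}(1+L_{ri})^l$.

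The main obstacle, and the only genuinely non-mechanical step, is the identification of the $l=0$ mismatch with the single disturbance $\omega_i(s_2)$: it relies on observing that the applied control coincides with the candidate's own first input, so that all of the \emph{nominal} discrepancy cancels and only the one-step disturbance survives. Everything afterward is the same Gronwall/Lipschitz bookkeeping already used for Lemma \ref{Lemma: error norm between optimal and candidate}, merely carried out with two different Lipschitz constants on the two segments and with a careful matching of the absolute time indices to the correct segment (open-loop versus terminal) of each candidate.
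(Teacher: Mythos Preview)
Your proposal is correct and follows essentially the same route as the paper: identify the $l=0$ mismatch as the single disturbance $\omega_i(s_2)$ (since the applied input at $s_2$ coincides with the first input of the $s_2$-candidate), then run the Gronwall recursion with constant $L_i$ on the common open-loop tail and with $L_{ri}$ after the handover at absolute time $k^t_i+N_{k^t_i}$. One minor remark: no extension via $\hat{\mathbf{x}}^r_i(k^t_i)$ is needed here, since both candidates switch to the terminal law at the \emph{same} absolute time $k^t_i+N_{k^t_i}$ and the range $l\in[0,H_{k^t_i}-1]$ in part $ii)$ stays within both candidates' horizons; the terminal sequence $\hat{\mathbf{x}}^r_i$ enters only in Lemma~\ref{Lemma: error norm between terminal and candidate}.
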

\begin{proof}
	According to $\hat{x}_i (k^t_i + H_{k^t_i}| k^t_i + H_{k^t_i}) = x_i (k^t_i + H_{k^t_i}) = f_i(x_i(k^t_i + H_{k^t_i} - 1), \hat{u}^{\star}_i(k^t_i + H_{k^t_i} - 1 | k^t_i )) + \omega_i(k^t_i + H_{k^t_i} - 1) $, and $\hat{x}_i (k^t_i + H_{k^t_i} | k^t_i + H_{k^t_i} - 1 ) = f_i(x_i(k^t_i + H_{k^t_i} - 1),\hat{u}_i(k^t_i + H_{k^t_i} - 1 | k^t_i + H_{k^t_i} - 1))$, we can conclude that $\| \hat{x}_i (k^t_i + H_{k^t_i}| k^t_i + H_{k^t_i}) - \hat{x}_i (k^t_i + H_{k^t_i} | k^t_i + H_{k^t_i} - 1 ) \|_{P_i} \leq \Gamma_{P_i}(1) = \eta_i \bar{\lambda}(\sqrt{P_i}) $. Furthermore, the difference between the feasible states at two consecutive instants are normalized as follows, for $l \in [0, N_{k^t_i} - H_{k^t_i}]$,
	\begin{eqnarray*}
		&& \left\| \hat{x}_i(k^{t}_i \!+ H_{k^t_i} \!+ l | k^t_i \!+ H_{k^t_i}) - \hat{x}_i(k^{t}_i \!+ H_{k^t_i} \!+ l | k^t_i \!+ H_{k^t_i} \!- 1) \right\|_{P_i} \\
		&& \leq \left\| \hat{x}_i (k^t_i + H_{k^t_i}| k^t_i + H_{k^t_i}) - \hat{x}_i (k^t_i + H_{k^t_i} | k^t_i + H_{k^t_i} - 1 ) \right\|_{P_i} \\
		&& \quad + \sum_{\tau=0}^{l-1} L_i \big\| \hat{x}_i (k^t_i + H_{k^t_i} + \tau | k^t_i + H_{k^t_i} ) \\
		&& \qquad\qquad\quad - \hat{x}_i (k^t_i + H_{k^t_i} + \tau | k^{t}_i + H_{k^t_i} - 1 ) \big\|_{P_i} \\
		&& \leq \eta_i \bar{\lambda}(\sqrt{P_i}) (1+L_i)^l,
	\end{eqnarray*}
	For $l \in [0, H_{k^t_i} - 1]$,
	\begin{eqnarray*}
		&& \left\| \hat{x}_i(k^{t}_i \!+ N_{k^t_i} \!+ l | k^t_i \!+ H_{k^t_i}) - \hat{x}_i(k^{t}_i \!+ N_{k^t_i} \!+ l | k^t_i \!+ H_{k^t_i} \!- 1) \right\|_{P_i} \\
		&& \leq  \left\| \hat{x}_i (k^t_i + N_{k^t_i} | k^t_i + H_{k^t_i}) - \hat{x}_i (k^t_i + N_{k^t_i} | k^t_i + H_{k^t_i} - 1 ) \right\|_{P_i} \\
		&& \quad + \sum_{\tau=0}^{l - 1} L_{ri} \big\| \hat{x}_i (k^t_i + N_{k^t_i} + \tau | k^t_i + H_{k^t_i} ) \\
		&& \qquad\qquad\qquad - \hat{x}_i (k^t_i + N_{k^t_i} + \tau | k^{t}_i + H_{k^t_i} - 1 ) \big\|_{P_i} \\
		&& \leq \eta_i \bar{\lambda}(\sqrt{P_i}) (1+L_i)^{N_{k^t_i} - H_{k^t_i}} (1+L_{ri})^{l}.
	\end{eqnarray*}
\end{proof}

\begin{lemma}\label{Lemma: error norm between terminal and candidate}
	The error norm between the candidate state for $k^t_i + H_{k^t_i}$ instant and the constructed terminal state sequence (\ref{terminal state sequence}) is bounded by the following inequality, for  $l \in [0, H_{k^t_i} - 1]$:
	\begin{eqnarray*}
		&& \left\| \hat{x}_i(k^{t}_i + N_{k^t_i} + l | k^t_i + H_{k^t_i} - 1) - \hat{x}^r_i(k^{t}_i + N_{k^t_i} + l | k^t_i) \right\|_{P_i} \\
		&& \leq \Gamma_{P_i}(H_{k^t_i} - 1) (1+L_i)^{N_{k^t_i} - H_{k^t_i} + 1} (1+ L_{ri})^{l}.
	\end{eqnarray*}
\end{lemma}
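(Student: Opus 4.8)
The plan is to read both trajectories as \emph{nominal} predictions that are driven by the same inputs up to the terminal instant $k^t_i + N_{k^t_i}$ and then both switch to the local feedback $K_i$, so that the sole source of discrepancy is the mismatch of their initial conditions, which is subsequently amplified by the Lipschitz dynamics. Concretely, I would split the index range at the branching point: the single ``anchor'' estimate at $l=0$ is obtained from Lemma~\ref{Lemma: Gronwall} together with a Gronwall iteration governed by $L_i$, and the remaining indices $l\in[1,H_{k^t_i}-1]$ are handled by a second Gronwall iteration governed by $L_{ri}$.

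First I would anchor the error at $l=0$. The candidate state $\hat{x}_i(\cdot\,|\,k^t_i + H_{k^t_i} - 1)$ is the nominal trajectory started from the \emph{actual} measurement $x_i(k^t_i + H_{k^t_i} - 1)$ and driven by the tail $\{\hat{u}^{\star}_i(k^t_i + H_{k^t_i} - 1\,|\,k^t_i),\dots,\hat{u}^{\star}_i(k^t_i + N_{k^t_i} - 1\,|\,k^t_i)\}$ of the optimal sequence, whereas $\hat{x}^r_i(k^t_i + N_{k^t_i}\,|\,k^t_i)=\hat{x}^{\star}_i(k^t_i + N_{k^t_i}\,|\,k^t_i)$ is the nominal trajectory started from the \emph{predicted} value $\hat{x}^{\star}_i(k^t_i + H_{k^t_i} - 1\,|\,k^t_i)$ under the very same tail. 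Since this tail is applied open-loop over $[k^t_i + H_{k^t_i} - 1,\,k^t_i + N_{k^t_i} - 1]$, Lemma~\ref{Lemma: Gronwall} with $l=H_{k^t_i}-1$ bounds the initial gap of the two trajectories at time $k^t_i + H_{k^t_i} - 1$ by $\Gamma_{P_i}(H_{k^t_i}-1)$. Propagating this gap through the $N_{k^t_i}-H_{k^t_i}+1$ common open-loop steps with the special Gronwall inequality (exactly as in the proof of Lemma~\ref{Lemma: error norm between optimal and candidate}), the growth rate being $L_i$, gives the bound $\Gamma_{P_i}(H_{k^t_i}-1)(1+L_i)^{N_{k^t_i}-H_{k^t_i}+1}$ at the terminal instant, which is precisely the claimed estimate evaluated at $l=0$.

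Next I would propagate beyond $k^t_i + N_{k^t_i}$. For $l\in[1,H_{k^t_i}-1]$ both trajectories obey the closed-loop nominal map $x\mapsto f_i(x,K_i x)$: the reference $\hat{x}^r_i$ by the construction~(\ref{terminal state sequence}), and the candidate because indices past $k^t_i+N_{k^t_i}$ fall into the $K_i$-feedback portion of the candidate law~(\ref{feasible control sequence}). Under Assumption~\ref{Assumption: Lipschitz}, writing $f_i(x,K_i x)=g_i(x,K_i x)+x$ shows this map is Lipschitz with constant $1+L_{ri}$ in the $P_i$-norm, so iterating $\|e(l)\|_{P_i}\le(1+L_{ri})\|e(l-1)\|_{P_i}$ from the $l=0$ value multiplies the anchor bound by $(1+L_{ri})^{l}$, yielding exactly the stated inequality.

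The only genuinely delicate point is the interval bookkeeping: one must verify that over $[k^t_i + H_{k^t_i} - 1,\,k^t_i + N_{k^t_i}]$ the candidate is driven by the optimal open-loop inputs (so the growth factor is $1+L_i$), whereas \emph{strictly after} $k^t_i + N_{k^t_i}$ it has switched to the feedback $K_i$ (so the growth factor becomes $1+L_{ri}$). Misassigning these two regimes would corrupt both exponents in the final bound. Everything else is a routine two-stage Gronwall iteration anchored by Lemma~\ref{Lemma: Gronwall}.
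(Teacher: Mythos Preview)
Your proposal is correct and follows essentially the same two-stage Gronwall argument as the paper: anchor the error at $l=0$ by propagating the $\Gamma_{P_i}(H_{k^t_i}-1)$ mismatch through $N_{k^t_i}-H_{k^t_i}+1$ open-loop steps with factor $(1+L_i)$, then iterate with factor $(1+L_{ri})$ once both trajectories are under the feedback $K_i$. The only cosmetic difference is that the paper obtains the anchor bound by directly invoking Lemma~\ref{Lemma: error norm between optimal and candidate} with the substitution $H_{k^t_i}\to H_{k^t_i}-1$, whereas you re-derive that step from Lemma~\ref{Lemma: Gronwall}; the content is identical.
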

\begin{proof}
	Recall Lemma \ref{Lemma: error norm between optimal and candidate}, when $l = N_{k^t_i} - H_{k^t_i}$, we have
	\begin{eqnarray*}
		&& \left\| \hat{x}_i(k^{t}_i + N_{k^t_i} | k^t_i + H_{k^t_i}) - \hat{x}^{\star}_i(k^{t}_i + N_{k^t_i} | k^t_i) \right\|_{P_i} \\
		&& \leq \Gamma_{P_i}(H_{k^t_i}) (1+L_i)^{N_{k^t_i} - H_{k^t_i}}.
	\end{eqnarray*}

	Since $\hat{x}^r_i (k^{t}_i + N_{k^t_i} | k^{t}_i) = \hat{x}^{\star}_i(k^{t}_i + N_{k^t_i}| k^{t}_i)$, replacing $H_{k^t_i}$ with $H_{k^t_i} - 1$ of the above inequality yields
	\begin{eqnarray*}
		&& \left\| \hat{x}_i (k^{t}_i + N_{k^t_i} | k^{t}_i + H_{k^t_i} - 1 ) - \hat{x}^r_i (k^{t}_i + N_{k^t_i} | k^{t}_i ) \right\|_{P_i} \\
		&&\leq \Gamma_{P_i}(H_{k^t_i} - 1) (1+L_i)^{N_{k^t_i} - H_{k^t_i} + 1}.
	\end{eqnarray*}

	Similarly, we get
	\begin{eqnarray*}
		&& \left\| \hat{x}_i(k^{t}_i + N_{k^t_i} + l | k^{t}_i + H_{k^t_i} - 1) - \hat{x}^r_i(k^{t}_i + N_{k^t_i} + l | k^t_i) \right\|_{P_i} \\
		&\leq&  \left\| \hat{x}_i (k^{t}_i + N_{k^t_i} | k^{t}_i + H_{k^t_i} - 1 ) - \hat{x}^r_i (k^{t}_i + N_{k^t_i} | k^{t}_i ) \right\|_{P_i} \\
		&\quad& + \sum_{\tau=0}^{l - 1} L_{ri} \big\| \hat{x}_i ( k^{t}_i + N_{k^t_i} + \tau | k^{t}_i + H_{k^t_i} - 1 ) \\
		&&\qquad\qquad\qquad\qquad\qquad - \hat{x}^r_i( k^{t}_i + N_{k^t_i} + \tau | k^{t}_i ) \big\|_{P_i} \\
		&\leq&  \Gamma_{P_i}(H_{k^t_i} - 1) (1+L_i)^{N_{k^t_i} - H_{k^t_i} + 1} (1+ L_{ri})^{l}.
	\end{eqnarray*}
\end{proof}

\begin{theorem}
  For subsystem $i$, suppose that the condition of Theorem 1 is satisfied and the external disturbance is bounded. Then based on the proposed Algorithm 1, the state of the subsystem dynamic (\ref{subsystem}) is steered to a terminal region $\mathcal{X}^r_i$ within a finite time and is ultimately bounded.
\end{theorem}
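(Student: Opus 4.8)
The plan is to treat the optimal egoistic cost $J^{s\star}_i(k^t_i)$ as a Lyapunov function evaluated along the trigger instants $\{k^t_i\}_t$, prove a strict decrease while $x_i(k^t_i)\notin\mathcal{X}^r_i$, and then close with a terminal-phase argument once the state enters $\mathcal{X}^r_i$. The starting point is recursive feasibility (Theorem 1): the shifted candidate sequence (\ref{feasible control sequence}) is admissible at $k^{t+1}_i$, so optimality gives $J^{s\star}_i(k^{t+1}_i)\leq\bar{J}^s_i(k^{t+1}_i)$, while the stability constraint (\ref{OCP constraint: egoistic Lyapunov bounded}) enforces $\bar{J}^s_i(k^{t+1}_i)\leq\gamma_i(k^{t+1}_i)$. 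The entire decrease estimate therefore reduces to comparing $\gamma_i(k^{t+1}_i)$ with $J^{s\star}_i(k^t_i)$, using the two explicit forms of $\gamma_i$ dictated by whether $H_{k^t_i}=1$ or $H_{k^t_i}>1$.

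The technical heart is to produce the increment bounds that the feasibility proof already anticipated. I would expand $\bar{J}^s_i$ stage by stage and subtract the reused optimal stages of $J^{s\star}_i(k^t_i)$, so that the surviving terms are differences of weighted norms $\|\hat{x}_i\|^2_{Q_i}-\|\hat{x}^{\star}_i\|^2_{Q_i}$ along the horizon plus the terminal $P_i$-weighted term. Applying the elementary inequality $\|a\|^2-\|b\|^2\leq\|a-b\|^2+2\|a-b\|\,\|b\|$ turns each such difference into a prediction-error norm times a nominal-state norm; bounding the prediction-error norms by Lemma \ref{Lemma: error norm between optimal and candidate}, Lemma \ref{Lemma: error norm between two candidate} and Lemma \ref{Lemma: error norm between terminal and candidate} reproduces precisely $\Upsilon(k^t_i)$ in the $H_{k^t_i}=1$ case and $\Lambda(H_{k^t_i})=\Lambda_2+\Lambda_3+\Lambda_4$ in the $H_{k^t_i}>1$ case, where $\Lambda_2$, $\Lambda_3$, $\Lambda_4$ collect respectively the reused-horizon stages, the terminal-feedback stages, and the final $P_i$-term across the concatenation point $k^t_i+N_{k^t_i}$. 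This yields $J^{s\star}_i(k^{t+1}_i)-J^{s\star}_i(k^t_i)\leq\Upsilon(k^t_i)-\|x_i(k^t_i)\|^2_{Q_i}-\|\hat{u}^{\star}_i(k^t_i|k^t_i)\|^2_{R_i}$ for $H_{k^t_i}=1$, and the analogous bound with $\Lambda(H_{k^t_i})$ for $H_{k^t_i}>1$.

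Strict decrease then follows from the self-triggered generator. The branch (\ref{Generator: H=1}) controls the $H_{k^t_i}=1$ regime: combined with $0<\sigma_i<1$ it ensures the accumulated error $\Upsilon(k^t_i)$ is dominated by $\sigma_i(\|x_i(k^t_i)\|^2_{Q_i}+\|\hat{u}^{\star}_i(k^t_i|k^t_i)\|^2_{R_i})$, so the increment is at most $-(1-\sigma_i)$ times the current stage cost; the branch (\ref{Generator: Stability}), being a $\sup$ over $H$ of the inequality $\Lambda(H)\leq\sigma_i(\Theta_{Q_i}(H-1)^2+\|\hat{u}^{\star}_i(k^t_i+H-1|k^t_i)\|^2_{R_i})$ and hence satisfied at the selected $H_{k^t_i}$, gives the same conclusion in the $H_{k^t_i}>1$ regime. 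Because $J^{s\star}_i$ is nonnegative and drops by a uniformly positive multiple of the stage cost at every trigger while the state stays outside $\mathcal{X}^r_i$, only finitely many triggers can occur before entry; the horizon-shrinkage rule (\ref{prediction horizon shrinkage}), which forces the predicted trajectory into $\mathcal{X}^f_i$, together with the feasibility estimate (\ref{Theorem proof: r terminal set}) then drives the true state into $\mathcal{X}^r_i$ in finite time.

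For the terminal phase, once $x_i(k)\in\mathcal{X}^r_i$ the controller switches to $u_i^{\text{local}}(k)=K_ix_i(k)$ as in (\ref{terminal control}), and I would conclude ultimate boundedness from Assumption \ref{Assumption: terminal}. Writing $V^f_i(x_i(k+1))=\|f_i(x_i(k),K_ix_i(k))+\omega_i(k)\|^2_{P_i}$, the triangle inequality together with part 3) of Assumption \ref{Assumption: terminal} and the disturbance bound $\|\omega_i(k)\|_{P_i}\leq\eta_i\bar{\lambda}(\sqrt{P_i})$ produces a contraction of $V^f_i$ modulo a residual depending only on $\eta_i$, which confines the state to a robustly invariant subset of $\mathcal{X}^r_i$. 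I expect the main obstacle to lie in the second step: matching the stage-wise error estimates to the exact definitions of $\Upsilon$, $\Lambda_2$, $\Lambda_3$ and $\Lambda_4$ is delicate because the candidate input concatenates the reused optimal controls with the terminal feedback $K_i(\cdot)$, so the two distinct Gronwall growth rates $(1+L_i)$ before and $(1+L_{ri})$ after the junction $k^t_i+N_{k^t_i}$ must be propagated consistently, and the $P_i$-weighted terminal term must be handled separately from the $Q_i$-weighted running stages.
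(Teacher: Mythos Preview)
There are two genuine gaps. First, ``optimality gives $J^{s\star}_i(k^{t+1}_i)\leq\bar{J}^s_i(k^{t+1}_i)$'' is false: the OCP $\mathcal{P}_i$ minimizes the total cost $J_i=J^s_i+J^c_i$, not the egoistic part alone, so the optimizer may well raise $J^s_i$ relative to the candidate. The paper flags exactly this in the remark preceding the theorem, and it is precisely why constraint (\ref{OCP constraint: egoistic Lyapunov bounded}) was imposed --- it yields $J^{s\star}_i(k^{t+1}_i)\leq\gamma_i(k^{t+1}_i)$ directly for any feasible (hence for the optimal) input, so the intermediate inequality must be dropped.

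Second, and more structurally, for $H_{k^t_i}>1$ you have misread what $\Lambda(H_{k^t_i})$ bounds. From the definition of $\gamma_i$ in that regime, $\gamma_i(k^{t+1}_i)=\bar{J}^s_i(k^{t+1}_i-1)+\Lambda(H_{k^t_i})-\|x_i(k^{t+1}_i-1)\|^2_{Q_i}-\|\hat{u}^{\star}_i(k^{t+1}_i-1|k^t_i)\|^2_{R_i}$: the reference level is the \emph{candidate} cost at the penultimate step $k^{t+1}_i-1$, and the subtracted stage cost sits there as well --- not at $k^t_i$. Accordingly, the paper does not bound the inter-trigger increment in one shot; it sets $\Delta J^s_i(k^t_i+H_{k^t_i})=\bar{J}^s_i(k^t_i+H_{k^t_i})-\bar{J}^s_i(k^t_i+H_{k^t_i}-1)$ and shows this \emph{one-step} increment is dominated by $\Lambda$ minus the stage cost at step $H_{k^t_i}-1$. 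The driving estimate here is Lemma~\ref{Lemma: error norm between two candidate}, comparing two \emph{consecutive candidate} trajectories and producing the $\Xi_\ast(l)$ factors that appear in $\Lambda_2,\Lambda_3,\Lambda_4$; your plan of ``subtracting the reused optimal stages of $J^{s\star}_i(k^t_i)$'' would instead give a multi-step bound governed by $\Gamma_{P_i}(H_{k^t_i})$ through Lemma~\ref{Lemma: error norm between optimal and candidate}, which is not $\Lambda$. To connect back to $J^{s\star}_i(k^t_i)$ one has to chain one-step decreases across all intermediate instants, which the paper leaves implicit. A related slip: branch (\ref{Generator: H=1}) does \emph{not} guarantee $\Upsilon(k^t_i)\leq\sigma_i(\|x_i(k^t_i)\|^2_{Q_i}+\|\hat{u}^{\star}_i(k^t_i|k^t_i)\|^2_{R_i})$ when $H_{k^t_i}=1$; it sets $H_1=1$ precisely when the opposite inequality holds, and the paper attributes the Case~1 decrease to branch (\ref{Generator: Stability}) instead.
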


\begin{proof}
  Above all, we prove that the system state starting from $\mathcal{X}_i \setminus \mathcal{X}^r_i$ will enter $\mathcal{X}^r_i$ in finite steps by considering the egoistic cost function as a Lyapunov function. We neglect the effect of prediction domain contraction here, and define the following differential term:
\begin{eqnarray*}
	H_{k^t_i} = 1: &\hspace*{-0.8em}& \Delta J^s_i(k^t_i + 1) = \bar{J}^s_i \left( k^t_i + 1 \right) - J^{s\star}_i \left(k^t_i \right), \\
	H_{k^t_i} \geq 2: &\hspace*{-0.8em}& \Delta J^s_i(k^t_i + H_{k^t_i}) = \bar{J}^s_i( k^t_i + H_{k^t_i}) - \bar{J}^{s}_i ( k^t_i + H_{k^t_i}-1).
\end{eqnarray*}

In view of the above two situations, the relevant theoretical analysis is derived separately.

\textit{Case 1:} $H_{k^t_i} = 1$, i.e., $k^{t+1}_i = k^t_i + 1$: Substituting the corresponding $\hat{\mathbf{u}}_i(k^t_i + 1)$ in (\ref{feasible control sequence}) into $\Delta J^s_i(k^t_i + 1)$ yields that
\begin{eqnarray*}
	\Delta J^s_i(k^t_i + 1) = \Delta^1_{i,1} + \Delta^1_{i,2} + \Delta^1_{i,3},
\end{eqnarray*}
where
\begin{eqnarray*}
	\Delta^1_{i,1} &=&  - \left\| x_i(k^t_i) \right\|^2_{Q_i} - \left\| \hat{u}^{\star}_i(k^t_i|k^t_i) \right\|^2_{R_i}, \\
	\Delta^1_{i,2} &=& \!\! \sum_{l=0}^{N_{k^t_i} - 1} \! \big\{ \left\| \hat{x}_i(k^t_i \!+\! 1 \!+\! l | k^t_i \!+\! 1) \right\|^2_{Q_i} \!-\! \left\| \hat{x}^{\star}_i(k^t_i \!+\! 1 \!+\! l | k^t_i) \right\|^2_{Q_i} \big\}, \\
	\Delta^1_{i,3} &=& \left\| \hat{x}_i(k^t_i+N_{k^t_i}|k^t_i+1) \right\|^2_{Q_i} + \left\| \hat{u}_i(k^t_i+N_{k^t_i}|k^t_i+1) \right\|^2_{R_i} \\
	&& + \left\| \hat{x}_i(k^t_i+1+N_{k^t_i}|k^t_i+1) \right\|^2_{P_i} \!-\! \left\| \hat{x}_i(k^t_i \!+\! N_{k^t_i} |k^t_i \!+\! 1) \right\|^2_{P_i} \\
	&&  + \left\| \hat{x}_i(k^t_i + N_{k^t_i} | k^t_i + 1) \right\|^2_{P_i} - \left\| \hat{x}^{\star}_i(k^t_i+N_{k^t_i}|k^t_i) \right\|^2_{P_i}.
\end{eqnarray*}

In pursuance of Lemma \ref{Lemma: error norm between optimal and candidate}, we have
\begin{eqnarray*}
	\left\| \hat{x}_i(k^t_i+1+l|k^t_i+1) - \hat{x}^{\star}_i(k^t_i+1+l|k^t_i) \right\|_{Q_i} \leq  \Xi_{Q_i}(l),
\end{eqnarray*}
for $l \in [0, N_{k^t_i} - 1]$. Then it follows that
\begin{eqnarray}\label{Theorem proof: DeltaJ2 final for H=1}
	\nonumber \Delta^1_{i,2} &\leq& \sum_{l=0}^{N_{k^t_i} - 1} \Xi_{Q_i}(l)^2 + 2 \Xi_{Q_i}(l) \left\| \hat{x}^{\star}_i(k^t_i+1+l|k^t_i) \right\|_{Q_i}. \\
	&&
\end{eqnarray}
The condition $\hat{x}_i ( k^{t}_i + N_{k^{t}_i} | k^{t}_i + 1) \in \mathcal{X}^r_i$ is revealed in (\ref{Theorem proof: r terminal set}). According to Assumption \ref{Assumption: terminal}, we have
\begin{eqnarray}\label{Theorem proof: Terminal condition}
	\nonumber&& \left\| \hat{x}_i (k^{t}_i + N_{k^t_i} + 1 | k^t_i + 1 ) \right\|^2_{P_i} - \left\| \hat{x}_i ( k^{t}_i + N_{k^t_i} | k^t_i + 1 ) \right\|^2_{P_i} \\
	&& \leq
	- \left\| \hat{x}_i (k^{t}_i + N_{k^t_i} | k^{t}_i + 1 ) \right\|^2_{\overline{Q}_i}.
\end{eqnarray}
Therefore, we have
\begin{eqnarray*}
	\Delta^1_{i,3} &\leq& \left\| \hat{x}_i(k^t_i + N_{k^t_i} | k^t_i + 1) \right\|^2_{P_i} -  \left\| \hat{x}^{\star}_i(k^t_i + N_{k^t_i} | k^t_i) \right\|^2_{P_i}.
\end{eqnarray*}

Let $H_{k^t_i} = 1, l = N_{k^t_i} - 1$ in Lemma \ref{Lemma: error norm between optimal and candidate}, one have
\begin{eqnarray*}
	&& \left\| \hat{x}_i(k^{t}_i + N_{k^t_i} | k^t_i + 1) - \hat{x}^{\star}_i(k^{t}_i + N_{k^t_i} | k^t_i) \right\|_{P_i} \leq \Xi_{P_i}(N_{k^t_i} - 1),
\end{eqnarray*}
then we have
\begin{eqnarray}\label{Theorem proof: DeltaJ3 final for H=1}
	\Delta^1_{i,3} \leq \Xi_{P_i}(N_{k^t_i} - 1)^2 + 2 \Xi_{P_i}(N_{k^t_i} - 1) f_i.
\end{eqnarray}
Substituting (\ref{Theorem proof: DeltaJ2 final for H=1}) and (\ref{Theorem proof: DeltaJ3 final for H=1}) into $\Delta J^s_i(k^t_i + 1)$ and following the trigger principle (\ref{Generator: Stability}), it is straightforward to derive that
\begin{eqnarray}\label{Theorem proof: DeltaJ final for H=1}
	\nonumber&&\Delta J^s_i(k^t_i + 1) \\
	\nonumber&\leq& \Upsilon(k^t_i) - \left\| x_i(k^t_i) \right\|^2_{Q_i} - \left\| \hat{u}^{\star}_i(k^t_i|k^t_i) \right\|^2_{R_i} \\
	&\leq& (\sigma_i - 1) \left( \left\| x_i(k^t_i) \right\|^2_{Q_i} + \left\| \hat{u}^{\star}_i(k^t_i|k^t_i) \right\|^2_{R_i} \right) < 0.
\end{eqnarray}

\textit{Case 2:} $H_{k^t_i} > 1$, i.e., $k^{t+1}_i = k^t_i + H_{k^t_i}$: Similarly, we have
\begin{eqnarray*}
	\Delta J^s_i(k^t_i + H_{k^t_i}) = \Delta^H_{i,1} + \Delta^H_{i,2} +  \Delta^H_{i,3} + \Delta^H_{i,4},
\end{eqnarray*}
where
\begin{eqnarray*}
	\Delta^H_{i,1} &=&  - \left\| x_i(k^{t}_i + H_{k^t_i} - 1 ) \right\|^2_{Q_i} \!-\! \left\| \hat{u}^{\star}_i(k^{t}_i + H_{k^t_i} - 1|k^t_i) \right\|^2_{R_i}, \\
	\Delta^H_{i,2} &=& \sum_{l = 0}^{N_{k^t_i} - H_{k^t_i} - 1} \!\!\! \big\{ \left\| \hat{x}_i(k^{t}_i + H_{k^t_i} + l | k^{t}_i + H_{k^t_i}) \right\|^2_{Q_i} \\
	&& \qquad\qquad\;\; - \left\| \hat{x}_i(k^{t}_i + H_{k^t_i} + l | k^{t}_i + H_{k^t_i} \!-\! 1) \right\|^2_{Q_i} \big\}, \\
	\Delta^H_{i,3} &=& \sum_{l = 0}^{H_{k^t_i} - 2} \big\{ \left\| \hat{x}_i(k^{t}_i + N_{k^t_i} + l | k^{t}_i + H_{k^t_i}) \right\|^2_{\overline{Q}_i} \\
	&& \qquad\qquad\; - \left\| \hat{x}_i(k^{t}_i + N_{k^t_i} + l | k^{t}_i + H_{k^t_i} - 1) \right\|^2_{\overline{Q}_i} \big\}, \\
	\Delta^H_{i,4} &=& \left\| \hat{x}_i(k^{t}_i + H_{k^t_i} + N_{k^t_i} - 1 | k^{t}_i + H_{k^t_i}) \right\|^2_{\overline{Q}_i} \\
	&& + \left\| \hat{x}_i(k^{t}_i + H_{k^t_i} + N_{k^t_i} | k^{t}_i + H_{k^t_i}) \right\|^2_{P_i} \\
	&& - \left\| \hat{x}_i(k^{t}_i + H_{k^t_i} + N_{k^t_i} - 1 | k^{t}_i + H_{k^t_i}) \right\|^2_{P_i} \\
	&& + \left\| \hat{x}_i(k^{t}_i + H_{k^t_i} + N_{k^t_i} - 1 | k^{t}_i + H_{k^t_i}) \right\|^2_{P_i} \\
	&& - \left\| \hat{x}_i(k^{t}_i + H_{k^t_i} - 1 + N_{k^t_i} | k^{t}_i + H_{k^t_i} - 1) \right\|^2_{P_i}.
\end{eqnarray*}

It yields from Lemma \ref{Lemma: Gronwall} that
\begin{eqnarray*}
	\left\| \hat{x}^{\star}_i(k^t_i + l | k^t_i) - x_i(k^t_i + l ) \right\|_{Q_i} &\leq& \Gamma_{Q_i}(l), \\
	\Longrightarrow \left\| \hat{x}^{\star}_i(k^t_i + l | k^t_i) \right\|_{Q_i} - \left\| x_i(k^t_i + l) \right\|_{Q_i} &\leq& \Gamma_{Q_i}(l),
\end{eqnarray*}
Then we have
\begin{eqnarray*}
	\left\| x_i(k^t_i + l) \right\|^2_{Q_i} \geq \Theta_{Q_i}(l)^2,
\end{eqnarray*}
and
\begin{eqnarray}\label{Theorem proof: DeltaJ1 final for H>1}
	\Delta^H_{i,1} &\leq&  - \Theta_{Q_i}(H_{k^t_i} - 1)^2 - \left\| \hat{u}^{\star}_i(k^{t}_i + H_{k^t_i} - 1 | k^t_i) \right\|^2_{R_i}.
\end{eqnarray}

It follows from Lemma \ref{Lemma: error norm between two candidate} that
\begin{eqnarray*}
	\Delta^H_{i,2}
	&\leq& {\textstyle \sum_{l=0}^{N_{k^t_i} - H_{k^t_i} - 1}} \Big\{ \Xi_{Q_i}(l)^2 + 2 \Xi_{Q_i}(l) \times \\
	&& \left\| \hat{x}_i(k^{t}_i + H_{k^t_i} + l | k^{t}_i + H_{k^t_i} - 1) \right\|_{Q_i} \Big\},
\end{eqnarray*}
Recall Lemma \ref{Lemma: error norm between optimal and candidate}, the above inequality can be rewritten as
\begin{eqnarray}\label{Theorem proof: DeltaJ2 final for H>1}
	\Delta^H_{i,2} \leq \Lambda_2(H_{k^t_i}).
\end{eqnarray}

By Lemma \ref{Lemma: error norm between two candidate}, one has
\begin{eqnarray*}
	\Delta^H_{i,3}
	&\leq& {\textstyle\sum_{l = 0}^{H_{k^t_i} - 2}} \Psi_{\overline{Q}_i}(H_{k^t_i}, l)^2 + 2 \Psi_{\overline{Q}_i}(H_{k^t_i}, l) \\
	&& \times \left\| \hat{x}_i(k^t_i + N_{k^t_i} + l | k^t_i + H_{k^t_i} - 1)  \right\|_{\overline{Q}_i}.
\end{eqnarray*}

Then it can be observed from Lemma \ref{Lemma: error norm between terminal and candidate} that
\begin{eqnarray}\label{Theorem proof: DeltaJ3 final for H>1}
	\Delta^H_{i,3} \leq \Lambda_3(H_{k^t_i}).
\end{eqnarray}

Like (\ref{Theorem proof: Terminal condition}), for $H_{k^t_i} > 1$, we have
\begin{eqnarray*}
	&& \left\| \hat{x}_i(k^{t}_i + H_{k^t_i} + N_{k^t_i} | k^{t}_i + H_{k^t_i}) \right\|^2_{P_i} \\
	&& - \left\| \hat{x}_i(k^{t}_i + H_{k^t_i} + N_{k^t_i} - 1 | k^{t}_i + H_{k^t_i}) \right\|^2_{P_i} \\
	&\leq&
	- \left\| \hat{x}_i (k^{t}_i + H_{k^t_i} + N_{k^t_i} - 1 | k^{t}_i + H_{k^t_i}) \right\|^2_{\overline{Q}_i}.
\end{eqnarray*}

Recall Lemma \ref{Lemma: error norm between two candidate}, $\Delta^H_{i,4}$ can be converted to
\begin{eqnarray*}
	\Delta^H_{i,4} &\leq&  \left\| \hat{x}_i(k^t_i + N_{k^t_i} + H_{k^t_i} - 1 | k^t_i + H_{k^t_i}) \right\|^2_{P_i} \\
	&& - \left\| \hat{x}_i(k^t_i + N_{k^t_i}  + H_{k^t_i} - 1 | k^t_i + H_{k^t_i} - 1) \right\|^2_{P_i} \\
	&\leq& \Xi^r_{P_i}(H_{k^t_i}, H_{k^t_i} - 1)^2 + 2 \Xi^r_{P_i}(H_{k^t_i}, H_{k^t_i} - 1) \\
	&& \times \left\| \hat{x}_i(k^t_i+ H_{k^t_i} - 1 + N_{k^t_i}|k^t_i + H_{k^t_i} - 1) \right\|_{P_i}.
\end{eqnarray*}

It can be observed from Lemma \ref{Lemma: error norm between terminal and candidate} that
\begin{eqnarray*}
	&& \left\| \hat{x}_i (k^{t}_i + N_{k^t_i} + H_{k^t_i} - 1 | k^t_i + H_{k^t_i} - 1) \right\|_{P_i} \\
	&\leq& \left\| \hat{x}^r_i(k^{t}_i + N_{k^t_i} + H_{k^t_i} - 1 | k^t_i) \right\|_{P_i} + \Omega_{P_i}(H_{k^t_i}, H_{k^t_i} - 1),
\end{eqnarray*}
substituting the above inequality into $\Delta^H_{i,4}$ yields
\begin{eqnarray}\label{Theorem proof: DeltaJ4 final for H>1}
	\Delta^H_{i,4} \leq \Lambda_4(H_{k^t_i}).
\end{eqnarray}

Combining (\ref{Theorem proof: DeltaJ1 final for H>1})-(\ref{Theorem proof: DeltaJ4 final for H>1}), and reflect the trigger rule (\ref{Generator: Stability}), we ultimately have
\begin{eqnarray}\label{Theorem proof: DeltaJ final for H>1}
	 \nonumber&& \Delta J^s_i(k^t_i + H_{k^t_i}) \\
	 \nonumber&\leq& \Lambda(H_{k^{t}_i}) - \left\| x_i(k^{t}_i + H_{k^t_i} - 1 ) \right\|^2_{Q_i} \!-\! \left\| \hat{u}^{\star}_i(k^{t}_i + H_{k^t_i} - 1|k^t_i) \right\|^2_{R_i} \\
	 \nonumber&\leq& (\sigma_i-1) \Big[ \Theta_{Q_i}(H_{k^t_i} - 1)^2 + \left\| \hat{u}^{\star}_i(k^{t}_i + H_{k^t_i} - 1 | k^t_i) \right\|^2_{R_i} \Big] < 0. \\
	 &&
\end{eqnarray}

Since both cases can guarantee $\Delta J^s_i(k^t_i + H_{k^t_i}) < 0$, we can conclude from \cite{KSD-2015} and \cite{MM-1993} that the state trajectory of subsystem $i$ will be steered into $\mathcal{X}^r_i$ within a finite time interval and is ultimately bounded. Thus completes the proof.
\end{proof}


\section{Illustrative Example}
\textcolor{blue}{
In this section, we shall demonstrate the validity and merits of the proposed algorithm. The position and posture control of four nonholonomic agents is taken into account \cite{GH-2006}. The state of each agent is characterized by $\chi_i(k) = [x_i(k), y_i(k), \theta_i(k)]^T$, where $[x_i(k), y_i(k)]^T$ represents the position vector of points on the moving path, and $\theta_i(k)$ is the angle that the tangents of the path at these points make with $x$-axis. The agents' motion is controlled by $u_i(k) = [v_i(k), w_i(k)]^T$, which includes the linear velocity $v_i(k)$ and the rotational velocity $w_i(k)$. The discrete-time nominal kinematic model is given by
\begin{eqnarray*}
	\left\{\begin{array}{rll}
		x_i(k+1) &=&
		x_i(k) + T v_i(k) \cos{\theta_{i}(k)}, \\
     	y_i(k+1) &=&
		y_i(k) + T v_i(k) \sin{\theta_{i}(k)}, \\
		\theta_i(k+1) &=&
		\theta_i(k) + T w_i(k),
	\end{array}\right.
\end{eqnarray*}
where the sample period is $T = 0.5$ s.  The state constraint set and the input constraint set are defined as
$\mathcal{X}_i = \left\{ \chi_i(k) \in \mathbb{R}^3 : |x_i(k)| \leq 1, |y_i(k)| \leq 1, |\theta_i(k)| \leq \pi/2 \right\}$, $\mathcal{U}_i =\left\{ u_i(k) \in \mathbb{R}^2 : |v_i(k)| \leq 1, |w_i(k)| \leq 0.6 \right\}$.
The external perturb is characterized by $\eta_i = 0.0001$.
Then, the corresponding Lipschitz constant interpreted in Assumption \ref{Assumption: Lipschitz} is obtained as $L_i = 0.5$ and $L_{ri} = 1.8581$ (see \cite{KG-2002}, Lemma 3.2 and 3.3).  The weighted matrices $Q_i = 0.8I_3$, $R_i = 0.5I_2$, and $Q_{ij} = I_3$. According to Assumption \ref{Assumption: terminal}, the terminal weighted matrix and the controller gain are calculated as
\begin{eqnarray*}
	P_i &=&
	\begin{bmatrix}
		2.3823 & 1.2083 & 1.0634 \\
		1.2083 & 2.7725 & 1.2213 \\
		1.0634 & 1.2213 & 2.4061
	\end{bmatrix}, \\
	K_i &=&
	\begin{bmatrix}
		-1.3332 & -1.2582 & -1.1247 \\
		-0.6310 & -0.7197 & -0.8395
	\end{bmatrix}.
\end{eqnarray*}
}

\textcolor{blue}{
The neighbor relationship between this MAS is $\mathcal{N}_1 = \{4\}$, $\mathcal{N}_2 = \{1\}$, $\mathcal{N}_3 = \{2\}$, and $\mathcal{N}_4 = \{3\}$. As for the terminal set, we set $r_1 = r_3 = 0.056$, $r_2 = r_4 = 0.065$, $f_1 = f_3 = f_4 = 0.03$, $f_2 = 0.04$. The self-triggered parameters are $\sigma_1 = \sigma_3 = 0.9$, $\sigma_2 = 0.5$, $\sigma_4 = 0.85$. The initial states  are  given by $\chi_1(0) = [-0.5, 0.9, \pi/6]^T$, $\chi_2(0) = [0.2, -0.4, -\pi/3]^T$, $\chi_3(0) = [0.4, -0.6, -\pi/4]^T$, and $\chi_4(0) = [-0.8, 0.4, \pi/5]^T$. The simulation verification is divided into three aspects:
}

\begin{figure}
	\includegraphics[width=8.7cm]{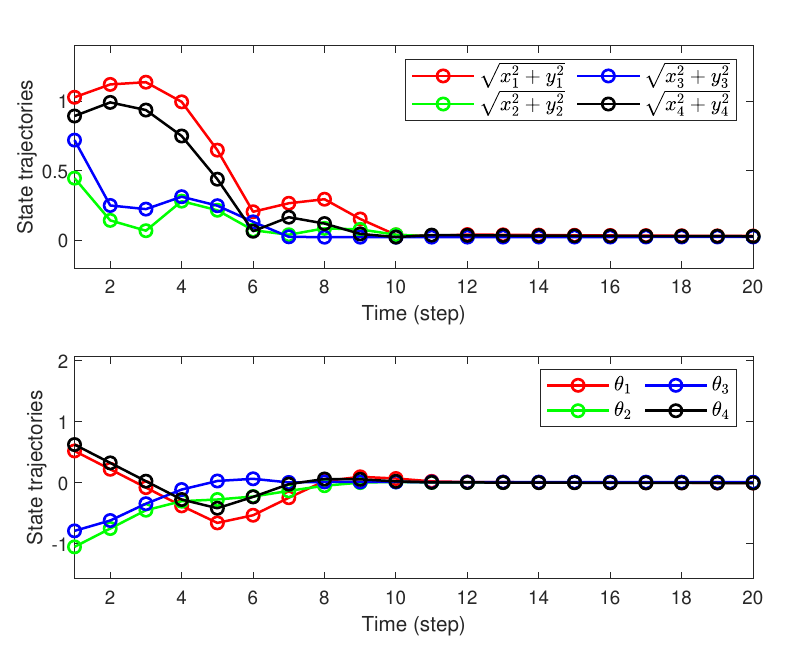}
	\caption{\textcolor{blue}{State trajectories of each agent under Algorithm 1.}}
	\label{Fig: state}
\end{figure}

\begin{figure}
	\centering
	\includegraphics[width=8.7cm]{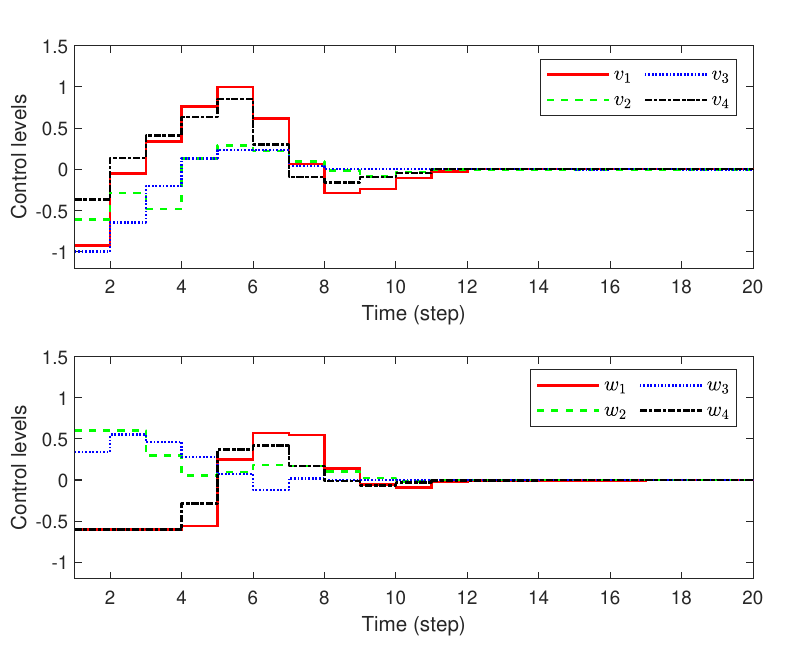}
	\caption{\textcolor{blue}{Control levels of each agent under Algorithm 1.}}
	\label{Fig: control}
\end{figure}

\begin{figure}
	\centering
	\includegraphics[width=8.7cm]{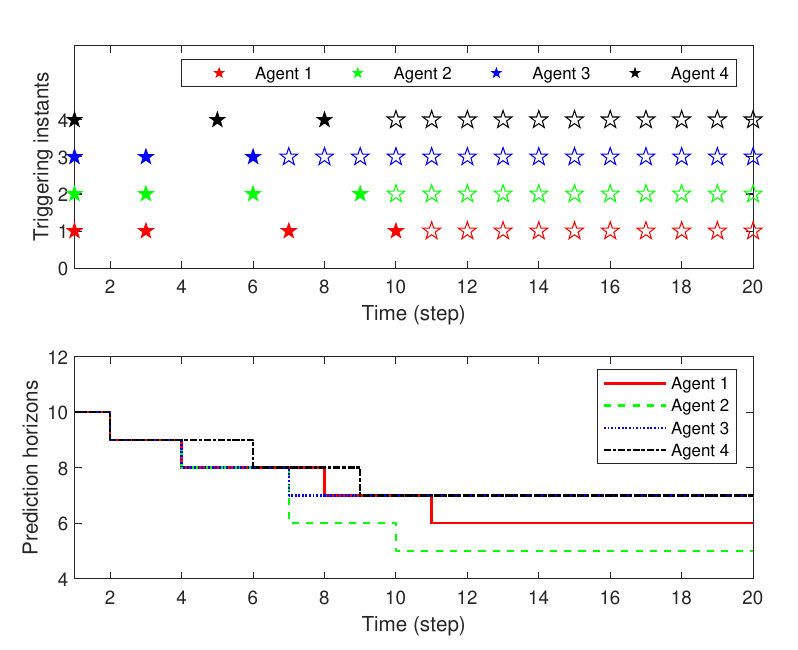}
	\caption{\textcolor{blue}{Triggering instants and prediction horizons at each triggered instant of each agent under Algorithm 1.}}
	\label{Fig: trigger and horizon}
\end{figure}

\begin{figure}
	\centering
	\includegraphics[width=8.7cm]{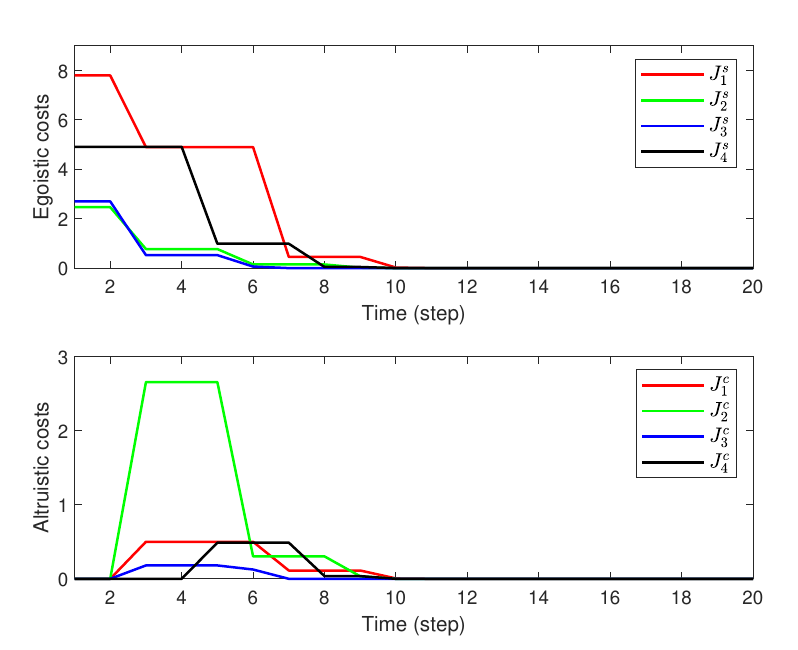}
	\caption{\textcolor{blue}{Cost levels of the egoistic cost $J^s_i$ and the altruistic cost $J^c_i$ for each agent under Algorithm 1.}}
	\label{Fig: cost}
\end{figure}

\textcolor{blue}{
(i) Effectiveness of the proposed coordination algorithm.
}

\textcolor{blue}{
The state trajectories and control levels of four subsystems are depicted in Fig. \ref{Fig: state} and Fig. \ref{Fig: control}.
The triggering instants of each agent are shown in Fig. \ref{Fig: trigger and horizon}.
The instants to solve $\mathcal{P}_i$ are emphasized by solid marks. The instants at which the controllers are switched into terminal controllers are opposite.
It shows that the corresponding prediction horizons are shrunk with the triggering instants under the proposed Algorithm 1. 
As can be seen from Fig. \ref{Fig: cost}, the Lyapunov function $J^s_i(k^t_j)$ is decreasing over the triggered instants. Each agent carries out collaborative optimization according to its own adjustment goal and the consensus goal of the distributed system.
}

\begin{figure}
	\centering
	\includegraphics[width=8.7cm]{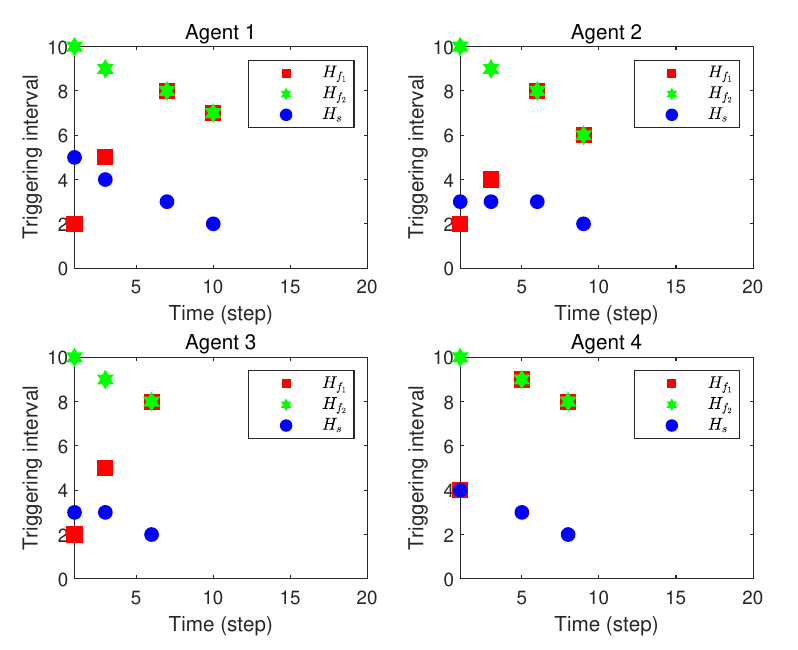}
	\caption{\textcolor{blue}{Self-triggered generator evolution of each agent under Algorithm 1.}}
	\label{Fig: generator}
\end{figure}

\textcolor{blue}{
(ii) Evolution of the proposed self-triggered generator.
}

\textcolor{blue}{
To clearly show the function of the proposed self-triggering mechanism, we show the value evolution of the generator at each triggering instant in Fig. \ref{Fig: generator} (only $H_{f_1}$, $H_{f_2}$ and $H_s$ are listed because $H_1$ is not activated in this example). It can be seen that the feasibility-based generator plays a decisive role in the first instant. According to Theorem 1, the proposed algorithm is recursively feasible, therefore the latter trigger interval is mainly determined by the generator based on stability.
}

\textcolor{blue}{
(iii) Comparison with the analogous algorithm.
}

\begin{figure}
	\centering
	\includegraphics[width=8.7cm]{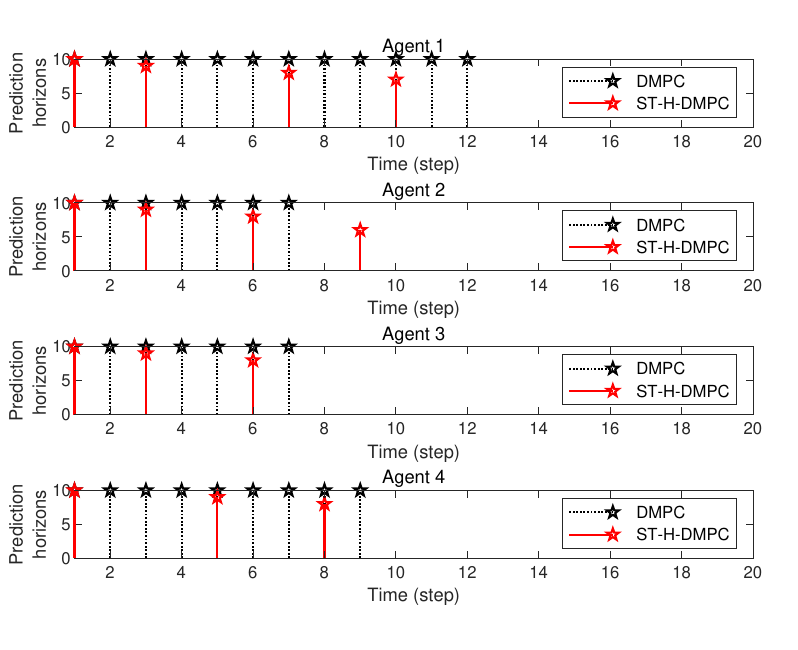}
	\caption{\textcolor{blue}{Triggering instants and the corresponding prediction horizons for each agent under DMPC algorithm and the proposed ST-H-DMPC algorithm.}}
	\label{Fig: with DMPC}
\end{figure}

\setlength{\tabcolsep}{0.5em}{
	\begin{table}[h]
		\renewcommand{\arraystretch}{1.5}
		\caption{\textcolor{blue}{The solving time (ms) of each agent during simulation time under four feasible algorithms.}}
		\label{Table: solving time}
		\centering
		\begin{tabular}{ccccc}
			\toprule
			& DMPC & H-DMPC & ST-DMPC & ST-H-DMPC \\
			\midrule
			Agent 1 & 404.63 & 337.38 & 142.54 &  91.32 \\
			Agent 2 & 180.07 & 161.56 & 122.95 &  91.79 \\
			Agent 3 & 147.84 & 147.72 &  72.86 &  80.18 \\
			Agent 4 & 221.42 & 208.19 &  78.81 &  71.42 \\
			Total   & 953.96 & 854.85 & 417.16 & 334.71 \\
			\bottomrule
		\end{tabular}
	\end{table}
}

\textcolor{blue}{
Furthermore, we compare the proposed self-triggered DMPC algorithm with adaptive prediction horizon (ST-H-DMPC) respectively with the traditional DMPC algorithm with fixed prediction horizon (DMPC), the traditional DMPC algorithm with adaptive prediction horizon (H-DMPC), the self-triggered DMPC algorithm with fixed prediction horizon (ST-DMPC).
As it can be seen in Fig. \ref{Fig: with DMPC}, the proposed ST-H-DMPC algorithm significantly reduces the number of solutions to the optimization problem.
On the other hand, the concrete solution time during the simulation process time is shown in Table. \ref{Table: solving time}. 
The solving time observed in ST-H-DMPC algorithm are far below those observed by other algorithms. Our proposed algorithm obviously reduces the calculation time.
}

\textcolor{blue}{
According to the above simulation results, we conclude that the proposed self-triggered DMPC algorithm with adaptive prediction horizon can not only achieve the coordination of the MAS but also reduce the computational burden effectively without persistently monitoring the system.
}

\section{Conclusions}
In this paper, we propose a self-triggered DMPC mechanism for perturbed discrete-time nonlinear MASs, where an upper bound constraint on partial cost is introduced into the optimization problem.
The desirable control performance and reduced energy consumption can be achieved through the distributed self-triggered generator and the prediction horizon shrinkage strategy collectively.
In view of the theoretical guarantee of the algorithm, we present a rigorous analysis of the recursive feasibility criteria and stability arguments.
The merits of the proposed algorithm are explicitly illustrated through the simulation results.

The trigger principle is deduced through the integrated use of the maximum norm on the error between the predicted state and the actual state and the reciprocal triangular relation.
A potential future work will focus on reducing the resulting conservativeness of the self-triggered criteria.

\nocite{*}
\bibliography{ref}%

\end{document}